\newif\if@restonecol
\newtheorem{definition}{Definition}
\newtheorem{corollary}{Corollary}
\newtheorem{theorem}{Theorem}
\newtheorem{example}{Example}
\newcommand{\A}{\mathcal{A}}
\title{Matching Games with Additive Externalities}
\author{
Simina Br\^{a}nzei
\footnote{Aarhus University}
\and
Tomasz P. Michalak
\footnote{Oxford University}
\and
Talal Rahwan
\footnote{University of Southampton}
\and
Kate Larson
\footnote{University of Waterloo}
\and
Nicholas R. Jennings
\footnote{University of Southampton}
}
\begin{document}

\maketitle

\begin{abstract}
Two-sided matchings are an important theoretical tool used to model markets and social interactions.
In many real life problems the utility of an agent is influenced not only by their own choices, 
but also by the choices that other agents make. Such an influence is called an externality. 
Whereas fully expressive representations of externalities in matchings require exponential space,
in this paper we propose a compact model of externalities, in which the influence of a 
match on each agent is computed additively. In this framework, we analyze many-to-many and
one-to-one matchings under neutral, optimistic, and pessimistic behaviour, and provide both 
computational hardness results and polynomial-time algorithms for computing stable outcomes.
\end{abstract}

\vspace{-2mm}
\section{Introduction}\label{sec:Introduction}
Matching games are an important theoretical abstraction which have been extensively studied in several fields, including economics, combinatorial
optimization, and computer science. Matchings are often used to model markets, and examples include the classical marriage problem, firms and
workers, schools and students, hospitals and medical interns~\cite{Burkard09:AssignmentProblems}.

Previous matching literature focused primarily on one-to-one and one-to-many models ~\cite{RothSotomayor90:TwoSidedMatchings}.
More recently, however, attention has been paid to more complex models of many-to-many matchings due to their relevance to real-world situations. For example, most labour markets involve at least a few many-to-many contracts~\cite{Echenique06:TheoryStability}. More realistic matching models should take into account the fact that in many settings the utility of an agent is influenced not only by their own choices, but also by the choices that other agents make.
Such an influence is called an externality. For instance, companies care not only about the employers they hire themselves, but also about the employers hired by other companies. This aspect is crucial to how competitive a company is on the market, and so externalities must be considered to completely understand such situations. Modelling matchings with externalities is computationally challenging, as fully expressive representations require exponential space.

One of the central questions in matching games is stability.
There exists a rich literature on the stability of matchings (see, e.g., \cite{RothSotomayor90:TwoSidedMatchings}) which dates back to the classical stable marriage problem~\cite{GaleShapley62:StabilityOfMarriage}. In the presence of externalities, stability becomes much more complex and challenging phenomenon due to the fact that a deviation of some agents typically affects utilities of the residual agents --- thus, invoking a response from them that can change the worth of the deviation dramatically. This key issue has been heavily studied in the context of coalitional games.\footnote{\footnotesize In many respects, coalitional games are closely related to matching games.} Indeed, already in the seminal work of Neumann and Morgenstern (1944) the characteristic function form of a coalitional game was defined using the minimax rule. This definition assumes 
absolute pessimism of deviators who expect that residual agents hurt them to the maximum possible extent. Since then this idea has been followed by many other authors in the literature on coalitional games (see \cite{Koczy09:GEB}). This includes also Thrall and Lucas introductory paper on coalitional games with externalities \cite{Lucas:Thrall63}. On the other hand, Shapley and Shubik, among others, considered a complementary \emph{optimistic} assumption in which residual agents maximise the deviators' utility \cite{ShapleyShubik1966}. Between these two extremes a \emph{neutral} approach assumes no reaction from residual agents \cite{Koczy07:Recursive}.

Our contribution in this paper can be summarized as follows.
We formulate a compact model of externalities for matchings,
in which the influences of matches on the agents are computed additively.
Additive functions are a natural tradeoff between expressivity and computational attractiveness, and have been used in other domains,
such as additively separable hedonic games~\cite{Jackson02}.
Second, we consider key stability concepts, namely setwise stability in the case of many-to-many matchings
and setwise and pairwise stability in the case of one-to-one matchings, under optimistic, neutral and pessimistic reasoning, inspired by the game theoretic literature.
We study the computational properties of these stable sets, provide both hardness results and polynomial algorithms where applicable, and show how the sets are related to each other.

\vspace{-2mm}
\section{The Model}\label{sec:Model}

Let $N = M \cup W$ be the set of agents, where $M  = \{m_1, \ldots, m_{|M|}\}$ and $W = \{w_1, \ldots, w_{|W|}\}$ are disjoint.
A \emph{match}, $(m,w)$ is an edge between two agents $m\in M$ and $w\in W$.  We let a matching, $\A$, be a set
of all matches. 
If the number of allowable matches any agent can participate in is unrestricted then we say we have a \emph{many-to-many matching problem},
while if each agent can participate in at most one match then we have a \emph{one-to-one matching problem}.
We assume the formation of a match requires the consent of both parties, while severing a match can be done unilaterally by any of its
endpoints. The \emph{empty matching} contains no matches, while the \emph{complete matching} contains all the possible matches.
A matching game with externalities is defined as follows:

\begin{definition}[Matching Game with Externalities] \label{def:matching_game_externalities}
A \emph{matching game with externalities} is represented as a tuple $G = (M, W, \Pi)$, where $(M,W)$ is the set of agents 
and $\Pi$ is a real valued function such that $\Pi(\A|z)$ is the utility of agent $z$ when matching $\A$ forms.
\end{definition}

The model is related to the assignment game of Shapley and Shubik \cite{ShapleyShubik1971}, in which $\Pi(m,w)$ represents the monetary gains when $m$ and $w$ are matched, amount which should be distributed among the agents. It is also related to the generalization of the assignment game to include externalities by Sasaki and Toda \cite{SasakiToda96:MarriageExternalities}, in which for each matching $A$ and every $(m,w) \in A$, the amount $\Pi(m,w|A)$ represents the gains from matching of the pair $(m,w)$ at $A$.

The game belongs to the class of non-transferable utility (NTU) games in partition function form, where the valuation of each agent depends on the underlying coalition structure. Thus Definition \ref{def:matching_game_externalities} gives the most general form of a matching game with externalities when utility is non-transferable.
%
%
 For a more detailed treatment of games in partition function form, we refer the reader to Chalkiadakis \emph{et al} \cite{coopbook}, Myerson \cite{Myerson:values_pfg}, and Finus \cite{Finus01}. In addition, there exists an entire literature about externalities in different coalition formation contexts (see, e.g. Bloch \cite{Bloch96}, Ray and Vohra \cite{RayVohra}, Maskin \cite{Maskin}, Michalak \emph{et al} \cite{Michalak}).

We are interested in settings where an agent's utility is formed by additive externalities.

\begin{definition}[Matching Game with Additive Externalities]
A \emph{matching game with additive externalities} is represented as a tuple $G = (M,W,\Pi)$, where
$(M,W)$ is the set of agents and $\Pi$ is a real valued function such that $\Pi(m,w|z)$ is the value that 
agent $z$ receives from the formation of match $(m,w)$.
Given a matching $\A$ over $N$, the \emph{utility} of an agent $z$ in $\A$ is:
$u(z, \A) = \sum_{(m,w) \in \A} \Pi(m, w | z)$.
\end{definition}

Thus, in a matching game with additive externalities, an agent's utility is the sum of values it receives from matches it 
participates in, along with the sum of all externalities that arise due to the matchings of other agents.
The additive representation for matching games with externalities is related to an important succinct representation of hedonic 
games (see Bogomolnaia and Jackson \cite{Bogomolnaia}, Aziz, Brandt, and Seedig~\cite{Aziz11:OptimalHedonic,Aziz11:StableHedonic}).
An additively separable hedonic game is represented by a weighted graph such that every node is an agent, and the
value of an edge $a(i,j)$ represents the value that agent $i$ receives from $j$ when the two agents are in the same
coalition. There has been a recent surge of literature on the complexity of computing stable and optimal outcomes in additively separable hedonic games.
However, we are not aware of any study regarding computational aspects of externalities in either matching games or additively separable hedonic games.

We are interested in whether matchings are \emph{stable}, and whether there exist stable matchings given a particular matching game $G$.
In general, a matching is stable if no subset of agents has any incentive to reorganize and form new matchings amongst themselves.
We distinguish between three standard stability concepts which commonly appear in the matching literature. The first, 
setwise stability, is the most general and encompasses the other two (corewise stability and pairwise stability).
Unless otherwise noted, the stability concept used in this paper is setwise stability, which we interchangeably refer to as set stability.

\begin{definition}[Setwise Stable]
Given a matching game $G=(M,W,\Pi)$, a matching $\A$ of $G$ is \emph{setwise stable} if there does not exist a set of agents $B\subseteq N$,
which can improve the utility of at least one member of $B$ while not degrading the others by:
\begin{itemize}
\item rearranging the matches among themselves
\item deleting a (possibly empty) subset of the matches with agents in $N \setminus B$.
\end{itemize}
\end{definition}
If such a coalition $B$, exists, then it is called a \emph{blocking coalition}.

\begin{definition}[Corewise Stable]
Given a matching game $G=(M,W,\Pi)$, a matching $\A$ of $G$ is \emph{corewise stable} if there does not exist a set of agents $B\subseteq N$,
which can improve the utility of at least one member of $B$ while not degrading the others by:
\begin{itemize}
\item rearranging the matches among themselves 
\item deleting all the matches with agents in $N \setminus B$.
\end{itemize}
\end{definition}

In matching games without externalities, the actions taken by other agents have only a limited effect on an agent -- its utility depends solely on 
who the agent is matched with, and does not depend on matches involving others.
However, if there are externalities then this is no longer true. 
The utility of agent $m$ for example, may depend on the matches involving agent $w$ even if $(m,w)\not\in\A$.  Therefore, we argue, the stability concepts 
need to account for the actions taken by agents in $N\setminus B$ after a deviation by coalition $B$, since these actions will effect the final utility of 
the deviating agents in $B$.
However, it may be hard to compute the possible reactions to a deviation since there are potentially an exponential number (i.e. all possible matchings 
amongst agents in $N \setminus B$).  
Instead, in this paper we consider several natural heuristics that deviating agents in $B$ can use to reason about, and approximate, the reactions to 
their possible deviations.
\begin{itemize}
\item Agents in blocking coalition $B$ may take a \emph{neutral} perspective. That is, they assume that agents in 
$N \setminus B$ will not react to the deviation.  All existing matches amongst non-deviating agents will remain and no new matches will form.

\item Agents in $B$ may take an \emph{optimistic} attitude and assume that any response to their deviation will be optimal. 
Such an idea has 
been studied in the literature in coalitional games~\cite{ShapleyShubik1966,Koczy07:Recursive}.
In our model, optimism translates to the following. When coalition $B$ considers deviation $\A^{'}$ from $\A$, every agent $i \in B$ evaluates the deviation assuming the agents in $N \setminus B$ will organize themselves in the best possible way for $i$. That is, $i$ hopes that $N \setminus B$ will cut any matches with a negative influence on $i$ and form all the matches with a positive influence on $i$, including initiating 
those with $i$ as an endpoint. According to $i$'s evaluation, this leads to matching $\A^{''}$. 
If $i$ improves (or does not degrade) in $\A^{''}$ compared to $\A$, then $i$ agrees to participate in blocking.

\item Agents in $B$ may take a \emph{pessimistic} attitude and assume that the agents in $N \setminus B$ will take actions so as to punish the members of $B$. 
That is, the non-deviators will cut matches with a positive influence on the deviators, and will form any new matches with a negative influence on them.
\end{itemize}
The idea of pessimism has also been studied in the coalitional game theory literature~\cite{Lucas:Thrall63,Koczy09:GEB,Koczy07:Recursive}.
Note that with optimistic and pessimistic attitudes, there may be inconsistencies among the reactions that the members of $B$ expect from $N \setminus B$,
and thus the best or worst possible outcome may not happen for each member of $B$.




It should be noted that the optimistic and pessimistic cases are in fact the best, and worst, possible reactions to the deviating agents, respectively. As such, these cases represent \textit{upper and lower bounds} on the impact a reaction would have on the deviating agents. Based on this, we argue that the analysis of those two particular cases is essential when assessing the expected reward/risk associated with the deviation. Such bounds can also be very useful when designing optimization algorithms for matching games with externalities. In fact, the idea of computing bounds on externalities has already proven very useful for general partition function games with positive or negative externalities; this idea provided the corner stone for the currently-state-of-the-art algorithm for coalition structure generation for those games. 

It is also important to note that each member of a blocking coalition $B$ is required to perform at least one action,  
by severing a match with another agent in $N$, or by forming a new match with another agent in $B$.
More realistic definitions of stability can incorporate different layers of deviators, such as agents who 
perform the deviation and agents who agree to it without actively participating.
However, such definitions can be problematic, and require specifying which agents are identified as deviators and how they should be treated depending on their role.
For this reason, in this paper we only consider one type of deviators, those who are required to perform at 
least one action.


\vspace{-4mm}
\section{Many-to-Many Matchings}
In this section we analyze the stability of many-to-many matchings using the stable set as a solution concept.
We provide hardness for computing stable outcomes, an FPT algorithm for the optimistic stable set, and describe the relationship between the stable sets.

\vspace{-2mm}
\subsection{Neutral Stability}

The neutral set can be empty, as the following example illustrates.
\begin{example}\label{eg:empty_setwise_stable}
Let $M = \{m\}$, $W$ $=$ $\{w_1$, $w_2\}$, and $\Pi$ as follows:
$\Pi(m, w_i | m)$ $=$ $0$,
$\Pi(m, w_i|w_i)$ $=$ $\varepsilon$, and
 $\Pi(m, w_1 | w_2)$ $=$ $\Pi(m, w_2| w_1)$ $=$ $-\Delta$, where $\Delta > \varepsilon > 0$.
The empty matching is blocked by $(m, w_1)$, $\{(m, w_1)$, $(w_2)\}$ is blocked by $(m, w_2)$, $\{(m, w_2), (w_1)\}$ is blocked by $(m, w_1)$, and
$\{(m, w_1), (m, w_2)\}$ is blocked by the empty matching, thus the neutral stable set is empty.
\end{example}

The next game contrasts the case where the empty matching belongs to the neutral stable set with the case where it does not.
\begin{example}\label{eg:paradoxical_matching}
Let $G = (M, W, \Pi)$ be such that $\Pi(m,w|m)=\Pi(m,w|w)=-\varepsilon < 0$ for all $m, w$, and $\Pi(m,w|z)= \delta > 0$ for all
$z \neq m,w$. If $\varepsilon \gg \delta$, the only matching satisfying neutral set stability is the empty matching, since the formation of any match
is very expensive for the participating agents and not compensated by the utility obtained from the externalities. Otherwise, if
$\delta \gg \varepsilon$, the grand coalition can block the empty matching through the complete matching,
and so the neutral stable set is empty.
\end{example}

For neutral stability we have the following hardness results.
\begin{theorem}\label{thm:nonempty_neutral_many}
Checking nonemptiness of the neutral stable set is NP-hard.
\end{theorem}
\begin{proof}
We provide a reduction from the \textit{Knapsack} problem.
Let $I = \langle U, s, v, B, K \rangle$, 
where $U = \{u_1, \ldots, u_n\}$ is a finite set, $s(u) \in \mathbb{Z}^{+}$
the size of element $u \in U$, $v(u) \in \mathbb{Z}^{+}$ the value of element $u \in U$, $B \in \mathbb{Z}^{+}$ a size constraint,
and $K \in \mathbb{Z}^{+}$ a value goal, such that $U^{'} \subseteq U$ is a solution if
$\sum_{u \in U^{'}} s(u) \leq B \; \mbox{and}\; \sum_{u \in U^{'}} v(u) \geq K$.
We construct a matching game $G$ such that $G$ has a nonempty stable set 
if and only if $I$ has a solution.
Let $M = \{x_1, \ldots, x_n, m_1, m_2\}$, $W = \{y_1, \ldots, y_n, w\}$, and $\Pi$ with non-zero entries:
\begin{itemize}
\item $\Pi(x_i, y_i | m_1) = - s(u_i)$ and $\Pi(x_i, y_i | m_2) = v(u_i)$, $\forall i \in \{1, \ldots, n\}$
\item $\Pi(m_1, w | m_1) = - B$
\item $\Pi(m_2, w | m_2) = K - \sum_{u_i \in U} v(u_i)$
\item $\Pi(x_j, w | x_j) = -1$ and $\Pi(m_i, y_j | y_j) = -1$, $\forall i \in \{1,2\}$ and $\forall j \in \{1, \ldots, n\}$
\end{itemize}
Since the instances with $\sum_{u_i \in U} v(u_i) < K$ are trivially solvable, we are interested in those cases where
$\sum_{u_i \in U} v(u_i) \geq K$, and so $\Pi(m_2, w | m_2) \leq 0$.
If $I$ has a solution $U^{'}$, we claim the following matching belongs to the stable set:
$\A = \left(\bigcup_{u_i \in U^{'}} \{(x_i,y_i)\}\right) \cup \{(m_1), (m_2), (w)\} \;(1)$.
First note that $U^{'}$ satisfies the knapsack conditions, and so the utilities of the agents in $\A$ are:
$u(m_1, \A) = \sum_{u_i \in U^{'}} \Pi(x_i, y_i | m_1) = \sum_{u_i \in U^{'}} -s(u_i) \geq -B$, 
$u(m_2, \A) = \sum_{u_i \in U^{'}} \Pi(x_i, y_i | m_2) = \sum_{u_i \in U^{'}} v(u_i) \geq K$,
and 
$u(x_i, \A) = u(y_i, \A) = u(w, \A) = 0, \forall i \in \{1, \ldots, n\}$.

All the agents, except possibly for $m_1$ and $m_2$, obtain their best possible utility in $\A$. Thus any blocking coalition, 
$B$, would have to 
contain at least one of $m_1$ and $m_2$. Since blocking requires each member of $B$ to perform an action, 
it follows that $m_1$, $m_2$ can only be involved in blocking $\A$ by forming a new match. Recall
that $\Pi(m_i, y_j|y_j) = -1$, and so agents $y_j$ will never accept a match with either $m_1$ or $m_2$.
Thus the only matches that $m_1$ and $m_2$ could form, if deviating, are $(m_1, w)$ and $(m_2, w)$, respectively. 

The best utility that $m_1$ can get when matched with $w$ is attained in $\A_{1}$ $=$ $\{(m_1, w)$, $(x_1)$, $\ldots$, $(x_n)$, 
$(y_1)$, $\ldots$, $(y_n)$, $(m_2)\}$:
$u(m_1, \A_{1}) = \Pi(m_1, w | m_1) = -B \leq u(m_1, \A) \; (2)$, 
where the candidate for the blocking coalition is
$B_1 = \{m_1, w\}$ $\cup$ $\left(\bigcup_{u_i \in U^{'}} \{x_i,y_i\}\right)$.

The best utility that $m_2$ can get when matched with $w$ is attained in 
$\A_{2}$ $=$ $\{(m_2, w)$, $(x_1, y_1)$, $\ldots$,
$(x_n, y_n)$, $(m_1)\}$:
$u(m_2, \A_{2}) = \Pi(m_2, w | m_2) + \sum_{i=1}^{n} \Pi(x_i, y_i | m_2)
= \left(K - \right.$ \\
$\left.\sum_{u_i \in U} v(u_i) \right) + \sum_{u_i \in U} v(u_i) = K \leq u(m_2, \A) \; (3)$,
where the candidate for the blocking coalition is 
$B_2$ $=$ $\{m_2, w\}$ $\cup$ $\left(\bigcup_{u_i \not \in U^{'}} \{x_i,y_i\}\right)$.
From inequalities $(2)$ and $(3)$, $m_1$ and $m_2$ cannot improve by deviating from $\A$. Since the other agents have no 
incentive to deviate, it follows that $\A$ belongs to the stable set.

Conversely, if the stable set of $G$ is non-empty, let $\A$ be a stable matching. First note that $\A$ must satisfy
neutral individual rationality, and so it cannot contain any match with negative value for one of the endpoints. 
Thus the only non-zero matches
that can be included in $\A$ are a subset of $\left(\bigcup_{u_i \in U} \{(x_i,y_i)\}\right)$. 
In addition, any matches of the form $(x_i, y_j)$, with $i \neq j$, can be removed from $\A$ without losing stability.
Thus without loss of generality, $\A$ can be written as in Equation $(1)$ 
for some $U^{'} \subseteq U$.
From $\A$ stable, coalitions $\{(m_1, w)\}$ and
$\{(m_2, w)\}$ are not blocking, thus Inequalities $(2)$ and $(3)$ hold.
Equivalently, the knapsack conditions are satisfied,  and so $U^{'}$ is a solution.
\end{proof}


\begin{theorem}\label{thm:membership_neutral_many}
Checking neutral stable set membership is coNP-complete.
\end{theorem}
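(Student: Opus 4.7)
My plan is to prove coNP-completeness in two steps. For membership in coNP, a certificate of non-stability is a blocking coalition $B^{'} \subseteq N$ together with the new matching $\A^{'}$ it induces. Verification takes polynomial time: check that each member of $B^{'}$ performs at least one action, compute the updated utilities additively via $\Pi$, and confirm that every member of $B^{'}$ is non-degraded while at least one strictly improves. For coNP-hardness I would reduce from the complement of Knapsack, adapting the construction of Theorem~\ref{thm:nonempty_neutral_many}: given a Knapsack instance $I$, build a matching game $G$ and a matching $\A$ such that $\A$ is neutrally stable if and only if $I$ has no solution.

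For the reduction I would take $\A$ to be the empty matching over agents $M = \{x_1, \ldots, x_n, m_2\}$ and $W = \{y_1, \ldots, y_n, w\}$, and tune $\Pi$ so that blockers of $\A$ correspond exactly to Knapsack solutions. Set $\Pi(m_2, w \mid m_2) = -(2K-1)$ and $\Pi(x_i, y_i \mid m_2) = 2 v(u_i)$, so that under the canonical candidate blocker $B^{'} = \{m_2, w\} \cup \bigcup_{i \in S}\{x_i, y_i\}$ forming $\{(m_2, w)\} \cup \{(x_i, y_i) : i \in S\}$, the utility of $m_2$ equals $2\sum_{i \in S} v(u_i) - (2K - 1)$, which is nonnegative precisely when $\sum_{i \in S} v(u_i) \geq K$. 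Symmetrically, set $\Pi(m_2, w \mid w) = 2B + 1$ and $\Pi(x_i, y_i \mid w) = -2 s(u_i)$, so $w$'s utility is $(2B+1) - 2\sum_{i \in S} s(u_i)$, which is nonnegative precisely when $\sum_{i \in S} s(u_i) \leq B$. The factor-of-two scaling combined with the $\pm 1$ offsets is the crucial trick: under integer arithmetic, non-degradation and strict improvement for $m_2$ and $w$ coincide, so any Knapsack-satisfying $S$ automatically provides the strict improver needed to block, and the canonical blocker exists iff $I$ has a solution.

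The main obstacle is ruling out spurious blockers of non-canonical structure. I would handle this in the style of Theorem~\ref{thm:nonempty_neutral_many} by assigning a sufficiently large negative payoff $-C$ to all ``wrong'' matches---entries such as $\Pi(x_i, y_j \mid y_j)$ for $i \neq j$, $\Pi(m_2, y_j \mid y_j)$, and $\Pi(x_i, w \mid x_i)$---so that no agent ever consents to a non-target match. A routine case analysis over the possible shapes of $B^{'}$ (coalitions missing $m_2$, missing $w$, or attempting to involve a hostile match) then shows every blocker must take the canonical form, yielding the equivalence: $\A$ admits a blocker iff $I$ has a Knapsack solution, hence $\A$ is neutrally stable iff $I$ has none, completing the coNP-hardness reduction.
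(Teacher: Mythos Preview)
Your proof is correct and follows the same high-level strategy as the paper: membership in coNP via a polynomial-size blocking certificate, and hardness via a Knapsack reduction in which two distinguished agents encode the size and value constraints while $(x_i,y_i)$ pairs encode item selection. The paper's gadgetry differs in two respects. First, instead of the empty matching, the paper fixes $\A$ to contain a single match $(m_2,w_2)$ whose externalities give the distinguished agents $m_1,w_1$ starting utilities $-B-\varepsilon$ and $K-\varepsilon$; any blocker must then recruit $m_2,w_2$ to sever that match, which supplies their required action and guarantees a strict improver. Second, rather than your doubling-and-parity device, the paper introduces a real $\varepsilon\in(0,1)$ to turn the integer Knapsack inequalities into strict ones. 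Your construction is arguably tidier (one fewer pair of agents, purely integer data, trivial starting matching), while the paper's version is recycled almost verbatim for the pessimistic and one-to-one variants later in the paper. One small point worth making explicit in your write-up: state that all $\Pi$ entries not mentioned are zero, so that each $x_i,y_i$ is exactly non-degraded in the canonical deviation and no coalition consisting solely of such pairs can produce a strict improver.
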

\begin{proof}
We show that the complementary problem, of deciding whether a matching does not belong 
to the stable set of a game, is $NP$-complete. Given matching $\A$, one can nondeterministically guess
pair $\langle B, \A^{'} \rangle$ such that matching $\A$ is blocked by coalition $B$ through matching $\A^{'}$. 
To verify that $\langle B, A^{'} \rangle$ is blocking for $\A$, it is sufficient to compute, for each $z \in B$, its utility in $\A$
and $\A^{'}$ (assuming that $N \setminus B$ does not react to the deviation).

To prove hardness, we provide a reduction from \textit{Knapsack}~\cite{GareyJohnson}. A Knapsack instance has the form
$I = \langle U, s, v, B, K \rangle$, where $U = \{u_1, \ldots, u_n\}$ is a finite set, $s(u) \in \mathbb{Z}^{+}$
the size of element $u \in U$, $v(u) \in \mathbb{Z}^{+}$ the value of element $u \in U$, $B \in \mathbb{Z}^{+}$ a size constraint,
and $K \in \mathbb{Z}^{+}$ a value goal.
Let $G = (M, W, \Pi)$ be a matching game with $M = \{x_1, \ldots, x_n, m_1, m_2\}$, 
$W = \{y_1, \ldots, y_n, w_1, w_2\}$, and $\Pi$ with non-zero entries:
\begin{itemize}
\item $\Pi(x_i, y_i | m_1) = -s(u_i)$, $\Pi(x_i, y_i | w_1) = v(u_i)$, $\forall i \leq n$ 
\item $\Pi(m_2, w_2 | m_1) = -B - \varepsilon$ and $\Pi(m_2, w_2 | w_1) = K - \varepsilon$, for some $0 < \varepsilon < 1$
\item $\Pi(m_1, w |w) = \Pi(m, w_1 | m)= -1$, for all $w \in W \setminus \{w_1\}$ and $m \in M \setminus\{m_1\}$
\end{itemize}
Let $\A = \{(m_2, w_2), (m_1), (w_1), (x_1), \ldots, (x_n), (y_1), \ldots, (y_{n})\}$,
with utilities:
\begin{itemize}
\item $u(x_i, \A) = \Pi(m_2,w_2|x_i) = 0$ and $u(y_i, \A)$ $=$ $\Pi(m_2$, $w_2 | y_i) = 0$, $\forall i \in \{1, \ldots, n\}$
\item $u(m_1, \A) = \Pi(m_2, w_2|m_1) = -B - \varepsilon$
\item $u(w_1, \A) = \Pi(m_2, w_2|w_1) = K - \varepsilon$
\item $u(m_2, \A) = \Pi(m_2,w_2|m_2) = 0$, $u(w_2, \A)$ $=$ $\Pi(m_2, w_2 | w_2) = 0$
\end{itemize}
All the agents except $m_1$ and $w_1$ obtain their maximum utility in $\A$. In addition, $m_1$ or $w_1$ can only block
by forming the match $(m_1, w_1)$, since all other matches with one of these agents as an endpoint are unfeasible.
We claim that $I$ has a solution if and only if $\A$ has a blocking coalition.
If $I$ has a solution $U^{'} \subseteq U$, consider the grand coalition $N$ and matching:
$\A^{'} = \left(\bigcup_{u_i \in U^{'}} \{(x_i, y_i)\} \right) \cup \left(\bigcup_{u_i \not \in U^{'}} \{(x_i, y_{i+1}), (x_{i+1}, y_i)\} \right)$ 
$\cup \{(m_1, w_1), (m_2), (w_2)\}$,
where $x_{n+1} = x_1$ and $y_{n+1}=y_1$.
The utilities of the agents in $\A^{'}$ are:
\begin{itemize}
\item $u(x_i, \A^{'}) = 0$ and $u(y_i, \A^{'}) = 0$, $\forall i \in \{0, \ldots, n\}$
\item $u(m_1, \A^{'}) =- \sum_{u_i \in U^{'}} s(u_i) \geq - B > - B - \varepsilon = u(m_1, \A)$
\item $u(w_1, \A^{'}) = \sum_{u_i \in U^{'}} v(u_i) \geq K > K - \varepsilon = u(w_1, \A)$
\item $u(m_2, \A^{'}) = u(w_2, \A^{'}) = 0$
\end{itemize}
Thus the grand coalition is blocking under neutrality, since it can form matching $\A^{'}$ and (weakly) improve the utilities of 
all its members by doing so.

Conversely, assume $\A$ is blocked by a coalition $B$ through a matching $\A^{'}$.
Note that all the agents except $m_1$ and $w_1$, obtain their maximum possible utility in $\A$.
Player $m_1$ cannot block without agent $w_1$, since all other matches $(m_1, w)$, for $w \neq w_1$ are unfeasible;
similarly for $w_1$.
Thus any blocking coalition must include agents $m_1, w_1$ and match $(m_1, w_1)$. In addition, for one of 
$m_1$, $w_1$ to strictly improve after the deviation, the edge $(m_2, w_2)$ must be removed from $\A^{'}$.
The conditions for improvement in $\A^{'}$ are:
$(4) \; u(m_1, \A^{'}) = \sum_{(x_i, y_i) \in \A^{'}} \Pi(x_i, y_i|m_1) = \sum_{(x_i, y_i) \in \A^{'}} - s(u_i) \geq u(m_1, \A) = - B - \varepsilon$
and
$(5)\; u(w_1, \A^{'}) = \sum_{(x_i, y_i) \in \A^{'}} \Pi(x_i, y_i | w_1) = \sum_{(x_i, y_i) \in \A^{'}} v(u_i) \geq u(w_1, \A) = K - \varepsilon$.
Since $v(u_i)$ and $s(u_i)$ are integers, Inequalities $(4)$ and $(5)$
are equivalent to 
$\sum_{(x_i, y_i) \in \A^{'}} s(u_i) \leq B$ and $\sum_{(x_i, y_i) \in \A^{'}}$ $v(u_i)$ $\geq$ $K$. Thus 
$U^{'} = \{u_i \in U \, | \, (x_i, y_i) \in \A^{'}\}$ is a solution for the $I$ instance.
\end{proof}

\vspace{-2mm}
\subsection{Optimistic Stability}
For optimistic stability we prove the following theorems.


\begin{theorem}\label{thm:nonempty_opt_many}
Checking nonemptiness of the optimistic stable set is NP-complete.
\end{theorem}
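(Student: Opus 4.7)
The plan is to establish NP-completeness through an NP upper bound paired with an NP-hardness reduction from Knapsack.

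For membership, my plan is to take a candidate matching $\A$ as the certificate and verify optimistic stability in polynomial time. The main observation is that under optimism every deviator $i$ imagines the non-deviators rearranging to maximise $i$'s payoff, so the largest imagined utility ever available to $i$ is bounded above by $u_{opt}(i) := \sum_{e \colon \Pi(e|i) > 0} \Pi(e|i)$. In particular, a singleton $\{i\}$ that holds any match in $\A$ can cut one of its matches and imagine the non-deviators restoring every positive externality, giving imagined utility exactly $u_{opt}(i)$. Hence stability forces $u(i, \A) = u_{opt}(i)$ for every agent that is matched in $\A$. For an isolated agent $i$, the only avenue to block is to join a coalition and form a new internal match, and the best imagined utility $i$ can secure through a pair $\{i, j\}$ can be computed by ranging over partners $j$. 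I would combine these singleton and pair checks into the verifier and argue that larger coalitions do not introduce qualitatively new blocking opportunities, since the imagined payoff of each coalition member is determined by the internal matches the coalition commits to, which already appear within a witnessing pair.

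For NP-hardness I would reduce from Knapsack, mirroring the construction of Theorem~\ref{thm:nonempty_neutral_many}. Item matches $(x_i, y_i)$ encode selection of $u_i \in U'$, and two controller agents $m_1, m_2$ carry the additive externalities $-s(u_i)$ and $v(u_i)$ that express the size budget $B$ and value goal $K$. Because optimistic stability demands that every agent sit at or near its ceiling, the game needs extra buffer matches whose role is to calibrate the controllers' optimistic ceilings so that they align with what a feasible Knapsack subset would deliver. The forward direction exhibits, for any Knapsack solution $U'$, a matching that simultaneously meets every agent's ceiling; the reverse shows that an optimistically stable matching must select items corresponding to $U'$ with $\sum_{u_i \in U'} s(u_i) \leq B$ and $\sum_{u_i \in U'} v(u_i) \geq K$.

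The principal obstacle is the broadness of optimistic blocking: because each member of a candidate coalition imagines a favourable rearrangement by the non-deviators, small coalitions can easily fabricate strict improvements out of imagined cascades. To keep the reduction tight I would introduce auxiliary \emph{protector} agents and edges whose externalities neutralise these imagined gains outside the intended Knapsack arithmetic, so that the only surviving blocking coalitions are those that genuinely detect a violation of the size or value constraint. Verifying the correctness of this tight calibration, and ensuring that no stray optimistic coalition circumvents it, is what I expect to require the most care in the proof.
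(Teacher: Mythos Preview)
Your membership argument is essentially the paper's: it, too, observes that an optimistic block can always be realised by a singleton or a pair (this is exactly Theorem~\ref{thm:opt_mem_many}), so a candidate matching can be verified in polynomial time by iterating over all $(m,w)$ and comparing each agent's optimistic best-case against its current utility. Your ceiling observation $u(i,\A)=u_{opt}(i)$ for matched $i$ is correct and is the content of Theorem~\ref{thm:opt_setwise_stable}.

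For hardness you diverge from the paper more than is necessary. The paper does not introduce buffer matches or protector agents; it simply reuses the reduction of Theorem~\ref{thm:nonempty_neutral_many} verbatim. The reason this works is already implicit in your own membership analysis: in the candidate matching $\A=\bigcup_{u_i\in U'}\{(x_i,y_i)\}$ the controllers $m_1,m_2$ and $w$ are \emph{unmatched}, so your ceiling condition does not bind them. All $x_i,y_i,w$ have $u_{opt}=0$ and already sit at $0$, hence can never supply the strict improver in a block. The only nontrivial optimistic deviations are therefore the pairs $\{m_1,w\}$ and $\{m_2,w\}$. Under optimism, after forming $(m_1,w)$ agent $m_1$ imagines every $(x_i,y_i)$ dissolved (each contributes $-s(u_i)$ to $m_1$), giving imagined utility exactly $-B$; after forming $(m_2,w)$ agent $m_2$ imagines every $(x_i,y_i)$ present, giving imagined utility exactly $K$. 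Thus $\{m_1,w\}$ fails to block iff $\sum_{U'}s(u_i)\le B$ and $\{m_2,w\}$ fails to block iff $\sum_{U'}v(u_i)\ge K$, which are precisely the Knapsack constraints. The ``imagined cascades'' you worry about are harmless here because every agent other than $m_1,m_2$ already attains its optimistic ceiling, so no auxiliary gadgets are required.

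In short, your plan is sound, but the anticipated obstacle in the hardness direction does not materialise; the neutral reduction already controls optimistic blocking without modification.
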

\begin{proof}
First note that checking nonemptiness of the optimistic setwise-stable set is in NP. Given a matching $\A$,
the utility of each agent $z \in N$ in $\A$ can be computed in
$O(n^2)$. In addition, for every pair $(m,w) \in (M,W)$, we can again compute in $O(n^2)$ the best case 
utilities of $m$ and $w$ when matched with each other.
Verifying if $\A$ is stable can be done by iterating over all pairs $(m,w)$ and checking in $O(1)$ if both $m$
and $w$ can (weakly) improve by deviating under optimism, compared to their current utility in $\A$.
The reduction is similar to that in Theorem 1. 
\end{proof}


A problem is \textit{fixed parameter tractable (FPT)} with respect to some parameter if there exists an algorithm
which given an instance of size $n$ and parameter value $k$, computes a solution in $O(f(k)n^{O(1)})$, where $f$
is a computable function. Non-emptiness of the optimistic stable set can be determined in polynomial time by imposing
some constraints on the matches. We say that a match $(m,w)$ is \emph{feasible controversial} if $\Pi(m,w|m) = \Pi(m,w|w)=0$,
and there exist distinct agents $i_1$ and $i_2$ such that $\Pi(m,w|i_1) > 0$ and $\Pi(m,w|i_2) < 0$.

\begin{theorem}\label{thm:opt_fpt}
Checking non-emptiness of the optimistic stable set is FPT with respect to the number of feasible controversial matches.
\end{theorem}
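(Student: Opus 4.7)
The plan is to use a bounded search tree: enumerate all $2^k$ inclusion sets $S$ of the feasible controversial matches, and for each $S$ build a candidate matching $\A_S$ by applying deterministic rules to the remaining matches, then verify in polynomial time whether $\A_S$ is optimistically setwise stable. The procedure returns YES iff some $\A_S$ is stable, giving total running time $O(2^k \cdot \mathrm{poly}(n))$, which is the FPT form we want.

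The rules for filling in the non-feasible-controversial matches rest on simple sign conditions on $\Pi(m,w|m)$, $\Pi(m,w|w)$, and the externalities, and I would prove that each such rule is forced in any optimistic setwise stable matching. If $\Pi(m,w|m) < 0$ (or symmetrically $\Pi(m,w|w) < 0$), exclude $(m,w)$: otherwise the harmed endpoint alone blocks by severing $(m,w)$, since under optimism its post-deviation utility is at least $u(m, \A) - \Pi(m,w|m) > u(m, \A)$. If both $\Pi(m,w|m), \Pi(m,w|w) \geq 0$ with at least one strict, include $(m,w)$: otherwise the pair $\{m,w\}$ blocks by forming it, each endpoint optimistically obtaining at least $u(\cdot, \A) + \Pi(m,w|\cdot)$, where the optimistic outsider arrangement can only weakly improve on $\A$. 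Finally, for the feasible non-controversial case (both endpoints indifferent and all third-party externalities of a single sign), I would force inclusion when all externalities are $\geq 0$ and exclusion when all are $\leq 0$ via an analogous coalition argument using $m$, $w$, and a benefitting (respectively, suffering) third party.

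For the verification step I would reuse the pairwise reduction implicit in the proof of Theorem~\ref{thm:nonempty_opt_many}: under optimism, checking setwise stability reduces to iterating over the $O(n^2)$ pairs and checking whether either could improve by a joint deviation (forming or severing a shared match, severing other matches of the endpoints, and optimistically rearranging outsiders), each pair being testable in polynomial time. The main obstacle is the feasible non-controversial case: the requirement that every coalition member perform at least one action interacts awkwardly with indifferent endpoints, so I would need to couple the third party's passive benefit with an auxiliary severance or match formation to meet this requirement. If no such action is available in some pathological configuration, the fallback is to argue that matches of that type do not appear in any pair's optimistic stability check, so that the deterministic rule remains \emph{safe} — any stable matching can be modified on those matches without destroying stability, and conversely $\A_S$ is stable iff its analogue with the forced rule is stable.
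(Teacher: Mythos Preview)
Your proposal is correct and follows essentially the same approach as the paper: partition the non-controversial matches into forced-in ($R$), forced-out ($F$), safe-in ($FP$), and safe-out ($FN$) classes, enumerate the $2^k$ subsets of the feasible controversial matches, and verify each resulting candidate in polynomial time via the pairwise reduction for optimistic membership. Your treatment is actually more careful than the paper's terse proof---the paper simply asserts that $FP$/$FN$ are ``safe'' without addressing the action-requirement subtlety you flag, and your fallback (showing that toggling such matches preserves stability rather than arguing they are forced) is exactly the right way to close that gap.
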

\begin{proof}
Any matching, $\A$, belonging to the optimistic stable set must include every match in the set:
$R = \{(m,w) : \Pi(m,w|m) \geq 0, \Pi(m,w|w) \geq 0$, and $\Pi(m,w|m) + \Pi(m,w|w) > 0\}$
and exclude every match in the set $F = \{ (m,w) : \Pi(m,w|m) < 0$ or $\Pi(m,w|w) < 0\}$. 
In addition, it is always safe to exclude matches from the set $FN = \{ (m,w) : \Pi(m,w|m) = \Pi(m,w|w) = 0$ and 
$\Pi(m,w|z) \leq 0, \forall z \neq m,w$ and the inequality is strict for some $z\}$, and to include the set
$FP = \{ (m,w) : \Pi(m,w|m) =\Pi(m, w|w) = 0$ and $\Pi(m,w|z) \geq 0, \forall z \neq m,w\}$.
Thus if the optimistic stable set is non-empty, it contains a matching with all the matches in $R$ and $FP$, but none 
of $FN$ or $F$. Parameterizing the remaining edges (i.e. feasible controversial), gives an FPT algorithm for checking
non-emptiness of the optimistic stable set.
\end{proof}

\begin{theorem}\label{thm:opt_mem_many}
Checking membership to the optimistic stable set is in P.
\end{theorem}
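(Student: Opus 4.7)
The plan is to show that membership in the optimistic setwise-stable set reduces to polynomially many singleton- and pair-deviation checks. Writing $E_B$ for the set of potential matches with both endpoints in $B$, additivity gives the optimistic utility of agent $i \in B$ under a deviation $\A'_B$ (the matches chosen within $B$) as
\[
u^{opt}_i(B, \A'_B) \;=\; \sum_{(m,w) \in \A'_B} \Pi(m,w\mid i) \;+\; \sum_{(m,w) \notin E_B} [\Pi(m,w\mid i)]^+,
\]
where $[x]^+ := \max(x,0)$ and the second sum captures the best-case reaction of $N \setminus B$. Setting $U^{opt}_i := \sum_{(m,w) \in M \times W} [\Pi(m,w\mid i)]^+$, one has $u^{opt}_i(B, \A'_B) \le U^{opt}_i$ in general, with equality attained by the singleton $B = \{i\}$ with $\A'_B = \emptyset$.

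The crux is the following reduction lemma: if $\A$ is optimistically blocked, then it is blocked either by some singleton $\{i\}$ whose sole action is severing an existing match of $\A$, or by some pair $\{m, w\}$ whose sole action is forming a new match $(m, w) \notin \A$. To prove it, take any blocking $B$ with deviation $\A'_B$ and any $i \in B$ with $u^{opt}_i(B, \A'_B) > u_i(\A)$. If $i$ is already matched in $\A$, the singleton $\{i\}$ (cutting one of $i$'s matches) attains $u^{opt}_i = U^{opt}_i \ge u^{opt}_i(B, \A'_B) > u_i(\A)$ and blocks. Otherwise $i$'s action in $B$ must be forming some $(i,j) \in \A'_B$ with $j \in B$, and I would verify by direct computation that
\[
u^{opt}_i(\{i,j\}, \{(i,j)\}) - u^{opt}_i(B, \A'_B) \;=\; \sum_{(m,w) \in E_B \setminus \{(i,j)\}} \bigl([\Pi(m,w\mid i)]^+ - \Pi(m,w\mid i) \cdot \mathbf{1}[(m,w) \in \A'_B]\bigr) \;\ge\; 0,
\]
since each summand is nonnegative regardless of the sign of $\Pi$ and of membership in $\A'_B$. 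The symmetric identity for $j$ yields $u^{opt}_j(\{i,j\}) \ge u^{opt}_j(B, \A'_B) \ge u_j(\A)$, so the pair $\{i,j\}$, in which both members act by forming $(i,j)$, blocks.

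Given the lemma, the algorithm precomputes $u_i(\A)$ and $U^{opt}_i$ for every $i \in N$ in $O(|N|^3)$ time, then performs two sweeps: (a) for every $i$ matched in $\A$, check whether $u_i(\A) < U^{opt}_i$, in which case the singleton $\{i\}$ blocks; and (b) for every pair $(m, w) \in M \times W$ with $(m, w) \notin \A$, compute $u^{opt}_m = \Pi(m,w\mid m) + U^{opt}_m - [\Pi(m,w\mid m)]^+$ and the analogous $u^{opt}_w$, and check whether both weakly improve with at least one strictly. If no check fires, $\A$ lies in the optimistic stable set. The total running time is $O(|N|^3)$.

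The main obstacle is the term-by-term dominance argument in Case~(ii) of the lemma: one must verify that shrinking the coalition from $B$ to $\{i,j\}$ converts the (possibly negative) contributions of the extra matches in $\A'_B$ into nonnegative $[\Pi]^+$-terms from the pair's optimistic externalities, simultaneously for both members of the pair, while preserving the already-paid cost of the distinguished edge $(i,j)$. Once this algebraic inequality is in place, the rest of the argument is routine bookkeeping.
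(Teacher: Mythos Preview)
Your proposal is correct and follows essentially the same approach as the paper: both reduce an arbitrary optimistic block to a singleton or a pair, and then check those in polynomial time. Your treatment is more explicit---you write down the optimistic utility formula, isolate the upper bound $U^{opt}_i$, and verify the term-by-term dominance inequality for the pair case---whereas the paper argues informally that the smaller coalition ``expects that the edges in $\A'$ will form after the deviation'' and hence does at least as well. Your case split (on whether the strict improver $i$ is matched in $\A$) differs cosmetically from the paper's (on whether some strict improver's action is only cutting), but both lead to the same singleton/pair reduction.
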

\begin{proof}
Let $\A$ be a matching which can be blocked by a coalition $B$ through some deviation $\A^{'}$.
Assume there exists agent $z \in B$ which can strictly improve the optimistic estimation of
utility only by cutting matches in $\A^{'}$. Then $z$ can deviate alone by cutting the same matches as in $\A^{'}$
while expecting that the rest of the agents in $B$ will perform deviation $\A^{'}$ (including initiating matches with $z$
as an endpoint). Otherwise, any deviator which strictly improves utility forms a new match. Let $z$ be such an agent 
and $(z, z^{'})$ a new edge in $\A^{'}$. Then coalition $\{z, z^{'}\}$ can block by forming the edge $(z, z^{'})$,
since both agents $z$ and $z^{'}$ expect that the edges in $\A^{'}$ will form after the deviation.
\end{proof}

We can characterize optimistic stable sets as follows.
\begin{theorem} \label{thm:opt_setwise_stable}
Any matching in the optimistic stable set is a union of two disjoint matchings 
$(M^{'}, W^{'}) \cup (M^{''}, W^{''})$, where $M^{'} \cup M^{''} = M$, $W^{'} \cup W^{''} = W$, and every 
agent in $(M^{'}, W^{'})$ obtains their highest possible utility, while $(M^{''}, W^{''})$ is the empty matching.
\end{theorem}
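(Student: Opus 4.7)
The plan is to use singleton deviations under optimism. Let $\A$ be a matching in the optimistic stable set. Define $M' \subseteq M$ and $W' \subseteq W$ as the sets of agents in $M$, respectively $W$, that participate in at least one match of $\A$, and let $M'' = M \setminus M'$, $W'' = W \setminus W'$. By construction every edge of $\A$ lies in $M' \times W'$, so $\A$ decomposes as claimed into a matching on $M' \cup W'$ together with the empty matching on $M'' \cup W''$; the structural part of the theorem is thus immediate.

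The real content is to show that each agent $z \in M' \cup W'$ attains their maximum possible utility
\[ u_{\max}(z) := \max_{\A^{\ast}} u(z, \A^{\ast}) = \sum_{(m,w):\, \Pi(m,w|z) > 0} \Pi(m,w|z), \]
where the second equality holds because the many-to-many additive setting has no cardinality restriction, so the maximum is realised by the matching containing exactly the matches with strictly positive externality for $z$. I would prove this by considering the singleton coalition $B = \{z\}$. Since $z$ is matched in $\A$, it can perform at least one action by severing one of its matches, so $B = \{z\}$ is a valid deviating coalition. By the optimistic reasoning from Section~2, $z$ expects $N \setminus \{z\}$ to cut every match with negative externality for $z$ and to form every match with positive externality, including (re-)initiating any match having $z$ as an endpoint; the optimistic estimate of $z$'s post-deviation utility is therefore exactly $u_{\max}(z)$. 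If $u(z, \A) < u_{\max}(z)$, then $z$ would strictly improve, so $B$ blocks $\A$, contradicting its stability. Hence $u(z, \A) = u_{\max}(z)$, which is the claim.

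The main obstacle is to justify rigorously that the singleton cut deviation has optimistic value exactly $u_{\max}(z)$ even when $z$'s forced action severs a match that is beneficial to $z$. The key point is the explicit clause in the definition of optimism stating that $N \setminus B$ is imagined to \emph{initiate} matches with the endpoint in $B$, so $z$'s optimistic picture of $\A''$ can contain any set of edges regardless of which single edge $z$ actually cut; the absence of cardinality constraints in the many-to-many setting then ensures that the estimate indeed achieves $u_{\max}(z)$, after which the blocking argument above applies verbatim.
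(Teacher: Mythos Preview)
Your proof is correct and follows essentially the same approach as the paper: both argue via singleton deviations, observing that an unmatched agent cannot act alone while a matched agent who is not already at their optimum can sever an edge and, under optimism, expect the best possible configuration, hence would block. Your version is more detailed and, in particular, explicitly handles the subtlety that the severed edge might itself be beneficial to $z$ (resolved by the clause allowing $N\setminus B$ to re-initiate edges incident to $z$), a point the paper's proof leaves implicit.
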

\begin{proof}
Let $\A$ be a matching in the optimistic setwise-stable set and $z$ an agent . Since $\A$ is stable, $z$ is not blocking. 
Thus it must be that $z$ is either unmatched, case in which he cannot deviate, or $z$ is matched but already obtains the
highest possible utility, and so has no incentive to deviate. Thus $N$ can be partitioned in two subsets, $N^{'}$, the
agents that obtain in $\A$ their maximal utility, and $N^{''}$, the unmatched agents.
\end{proof}

\vspace{-4mm}
\subsection{Pessimistic Stability}
Unlike the neutral stable set, the pessimistic stable set of Example~\ref{eg:paradoxical_matching} is nonempty
when $\delta \gg \varepsilon$.
However, the pessimistic stable set can still be empty as can be seen from the following example.

\begin{example}
Let $G=(M, W, \Pi)$ with $M = \{x_1$, $x_2$, $m_1$, $m_2\}$, 
$W = \{y_1$, $y_2$, $w_1$, $w_2\}$, and $\Pi(x_1, y_1|m_1)=-3$,
$\Pi(x_2, y_2| m_1)=-5$, $\Pi(x_1, y_1|m_2) = 2$, $\Pi(x_2, y_2 | m_2) = 10$, 
$\Pi(m_1, w_1 | m_1) = -4$, $\Pi(m_2, w_2 | m_2) = -7$,
and $\Pi(m_i, y_j | y_j) = -1$, $\Pi(x_i, w_j|x_i)=-1$,
$\forall i, j \in \{1,2\}$.
\end{example}

For pessimistic stability we have the following results with respect to the complexity of computing stable outcomes.

\begin{theorem}\label{thm:nonempty_pessimistic_many}
Checking nonemptiness of the pessimistic stable set is NP-hard.
\end{theorem}

\begin{theorem}\label{thm:membership_pessimistic_many}
Checking pessimistic stable set membership is coNP-complete.
\end{theorem}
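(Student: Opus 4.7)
The plan is to follow the blueprint of Theorem~\ref{thm:membership_neutral_many}. I would establish that non-membership is in NP and then reduce from Knapsack to obtain coNP-hardness.

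For the NP upper bound on non-membership, I would nondeterministically guess a pair $\langle B, \A' \rangle$ and verify in polynomial time that $\A'$ pessimistically blocks $\A$. The key observation is that although under pessimism different members of $B$ may anticipate different adversarial reactions from $N \setminus B$ (inconsistencies are permitted), for each individual $z \in B$ the worst-case utility decomposes thanks to additivity. Specifically, $z$'s pessimistic utility after the deviation equals the contribution of matches in $\A'$ incident to $B$ (which the coalition controls), plus the sum of $\Pi(m,w|z)$ over every possible match $(m,w)$ internal to $N \setminus B$ with $\Pi(m,w|z) < 0$; both quantities are computable in $O(n^2)$, so the verification is in P.

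For coNP-hardness I would adapt the Knapsack reduction used in Theorem~\ref{thm:membership_neutral_many}. Given $I = \langle U, s, v, B, K \rangle$, I would construct a game whose $M$ contains element-gadgets $x_i$ together with pivotal agents $m_1, m_2$, and analogously for $W$, and whose query matching $\A$ keeps $(m_2,w_2)$ alone, placing externality $-B - \varepsilon$ on $m_1$ and $K - \varepsilon$ on $w_1$. As in the neutral case, the only potentially profitable deviation for $m_1$ or $w_1$ is to form $(m_1,w_1)$ together with a subset of the gadget matches $(x_i,y_i)$ that encodes the selected items, and the knapsack constraints correspond exactly to the strict improvement conditions for $m_1$ and $w_1$. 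To make the reduction work under pessimism rather than neutrality, I would choose all the auxiliary externalities so that no match outside the gadget carries a negative externality on $m_1$ or $w_1$; then the worst-case term $\sum_{(m,w)\in N\setminus B} \Pi(m,w|z) < 0$ vanishes for $z \in \{m_1,w_1\}$, and their pessimistic utility from any candidate deviation coincides with the neutral utility.

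The main obstacle I expect is ruling out \emph{pessimism shortcuts}, where a small coalition could strictly improve some agent purely by exploiting feared negative externalities inside $N \setminus B$ rather than by satisfying the knapsack constraints. I would close this loophole by ensuring that every potential match carries a nonnegative externality on every agent whose improvement matters for blocking, so that the pessimistic penalty term contributes zero for the pivotal agents and any strict improvement must come from matches actually performed within $B$. Once this is arranged, the $\varepsilon$-slack and integrality argument of Theorem~\ref{thm:membership_neutral_many} translate verbatim, yielding a blocking coalition if and only if $I$ has a Knapsack solution.
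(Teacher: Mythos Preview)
Your plan is on the right track—reusing the reduction of Theorem~\ref{thm:membership_neutral_many} is exactly what the paper does—but your justification for why pessimism and neutrality coincide in that construction is both more complicated than needed and not quite airtight.

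You try to arrange that ``no match outside the gadget carries a negative externality on $m_1$ or $w_1$'' so that the pessimistic penalty vanishes. But the gadget matches $(x_i,y_i)$ themselves have $\Pi(x_i,y_i\mid m_1)=-s(u_i)<0$; hence if any unselected gadget agents lie in $N\setminus B$, the worst-case term for $m_1$ does \emph{not} vanish and your utility calculation breaks. The paper sidesteps this entirely by observing that in Theorem~\ref{thm:membership_neutral_many} the blocking coalition is the \emph{grand coalition} $N$ (unselected items are absorbed via the dummy matches $(x_i,y_{i+1}),(x_{i+1},y_i)$). With $N\setminus B=\emptyset$ there is no reaction at all, so pessimism and neutrality coincide trivially. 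For the converse direction, your worry about ``pessimism shortcuts'' is unfounded: for every $z\in B$ the pessimistic post-deviation utility is pointwise at most the neutral one, so any pessimistic block is automatically a neutral block (this is just the inclusion $\mathcal{N}\text{-set}(G)\subseteq\mathcal{P}\text{-set}(G)$). Thus if $I$ has no solution and $\A$ is neutrally stable, it is pessimistically stable as well.

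In short, no tailoring of externalities is required; the two-line argument ``grand coalition blocks when $I$ is solvable, and pessimistic blocking implies neutral blocking'' already closes the reduction. Your NP upper bound argument for non-membership is fine.
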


Our results for many-to-many matchings are summarized in Table 1. 
\vspace{-2mm}
\begin{table} \label{tab:many_many1}
\caption{Set Stability in Many-to-Many Matchings}
\centering
\begin{tabular}{||@{}c@{}||@{}c@{}|@{}c@{}|@{}c@{}||}
\hline \hline
\textit{Stable Set} & \textit{Neutral} & \textit{Optimistic} & \textit{Pessimistic} \\ \hline
\textit{Membership} & \textit{coNP-complete} & \textit{P} & \textit{coNP-complete} \\ \hline
\textit{Emptiness} & \textit{NP-hard} & \textit{NP-complete} & \textit{NP-hard} \\
\hline \hline
\end{tabular}
\end{table}
\vspace{-2mm}

\vspace{-2mm}
\subsection{Relationship Between the Stability Concepts}

In this section we analyze how the stable sets are related to each other.
Given a many-to-many matching game $G$, 
denote by $\mathcal{O}\mbox{-}set(G), \; \mathcal{N}\mbox{-}set(G)$,
and $\mathcal{P}\mbox{-}set(G)$ the optimistic, neutral, and pessimistic stable sets.
\begin{theorem}
Given any matching game $G$, the following inclusions hold
$\mathcal{O}\mbox{-set}(G) \subseteq \mathcal{N}\mbox{-set}(G) \subseteq \mathcal{P}\mbox{-set}(G)$.
\end{theorem}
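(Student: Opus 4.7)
The plan is to prove both inclusions by the contrapositive, using a single uniform observation: for any coalition $B \subseteq N$, any proposed deviation $\A'$ from $\A$, and any $z \in B$, the three post-deviation utilities satisfy
\[
u_P(z, \A') \;\le\; u_N(z, \A') \;\le\; u_O(z, \A'),
\]
where $u_N$, $u_O$, $u_P$ denote $z$'s evaluation under neutral, optimistic, and pessimistic reactions respectively, and the baseline utility $u(z, \A)$ in the original matching is the same in all three settings because it depends only on the additive sum $\sum_{(m,w) \in \A} \Pi(m,w|z)$.

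To see the inequality, fix $z \in B$ and let $\A^N$ be the post-deviation matching under neutrality, i.e.\ $\A'$ together with the matches of $\A$ internal to $N \setminus B$ that $B$ did not delete. The optimistic matching $\A^O_z$ differs from $\A^N$ exactly by adding every match $(m,w)$ (with $m,w \in N \setminus B$ or with $z$ as an endpoint) such that $\Pi(m,w|z) > 0$ and $(m,w) \notin \A^N$, and deleting every match $(m,w) \in \A^N$ (involving only non-deviators or having $z$ as an endpoint) such that $\Pi(m,w|z) < 0$. By additivity of $\Pi$, each individual toggle weakly increases $u(z,\cdot)$, hence $u_O(z,\A') = u(z, \A^O_z) \ge u(z, \A^N) = u_N(z,\A')$. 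The pessimistic matching $\A^P_z$ is defined symmetrically with the opposite sign convention, giving $u_P(z,\A') \le u_N(z,\A')$ by the same additivity argument.

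Given this chain of inequalities, both inclusions drop out immediately. For $\mathcal{O}\text{-set}(G) \subseteq \mathcal{N}\text{-set}(G)$, suppose $\A \notin \mathcal{N}\text{-set}(G)$ with blocking pair $(B, \A')$, so $u_N(z, \A') \ge u(z, \A)$ for all $z \in B$ with strict inequality for some $z^\star$. Then $u_O(z, \A') \ge u_N(z, \A') \ge u(z, \A)$ for every $z \in B$ and $u_O(z^\star, \A') > u(z^\star, \A)$, so $(B, \A')$ also blocks $\A$ under optimism and $\A \notin \mathcal{O}\text{-set}(G)$. For $\mathcal{N}\text{-set}(G) \subseteq \mathcal{P}\text{-set}(G)$, suppose $\A \notin \mathcal{P}\text{-set}(G)$ with blocking pair $(B, \A')$; since $u_N(z, \A') \ge u_P(z, \A') \ge u(z, \A)$ with strict inequality preserved for the witness, the same $(B, \A')$ blocks under neutrality.

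The only delicate step is specifying precisely what ``post-deviation matching'' the three semantics produce, especially in the optimistic case where non-deviators are allowed to initiate matches with agents in $B$, so that $\A^O_z$ can contain edges not present in $\A'$. Once we commit to defining $\A^N, \A^O_z, \A^P_z$ as above, the additive structure of $\Pi$ makes the pointwise inequality $u_P \le u_N \le u_O$ a one-line calculation, and the contrapositive arguments are essentially automatic. I do not expect any serious obstacle beyond this bookkeeping.
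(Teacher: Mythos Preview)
Your proposal is correct and takes essentially the same approach as the paper: both argue that for any deviation the pessimistic evaluation is bounded above by the neutral one, which is in turn bounded above by the optimistic one, and then deduce the two inclusions by contrapositive. Your write-up is simply more explicit than the paper's short paragraph, spelling out the additive toggle argument for $u_P \le u_N \le u_O$ where the paper appeals directly to ``best possible'' and ``worst possible'' reactions.
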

\begin{proof}
Let $B$ denote a potentially blocking coalition.
If a matching belongs to the optimistic set, then $B$ cannot hope to improve even when the rest of the agents organize themselves in the best possible way for $B$. Thus the optimistic set is included in all the other stable sets. If a matching belongs to the pessimistic set, then $B$ cannot improve when the rest of the society will punish them maximally for the deviation. Under the other stability concepts, $B$ assumes a potentially better reaction from $N \setminus B$. Thus if $B$'s deviation is not profitable in the optimistic or neutral scenarios, it is also not profitable in the pessimistic scenario. Hence the pessimistic set contains all the other sets.
\end{proof}

\vspace{-4mm}
\section{One-to-One Matchings}

We analyze one-to-one matchings from a neutral, optimistic and pessimistic stand point just as was done with the many-to-many matchings. However, before we proceed we comment on the choice of stability concept used for one-to-one matchings. In particular, in one-to-one matchings, the setwise stable set coincides exactly with the core since agents in a deviating group must necessarily cut all matches with non-members of the deviating group due to the constraint on the number of allowable partners.
Thus, while we continue to use the term stable set to mean the setwise stable set, readers more comfortable with the core solution concept can apply it in this section.  We also investigate pairwise stability and explicitly state we are looking at pairwise stability when warranted.

\begin{definition}[Pairwise Stable]
Given a matching game $G=(M,W,\Pi)$ and a matching $\A$, $\A$ is \emph{pairwise stable} if there does not exist a blocking coalition of size one or two.
\end{definition}
In the context of one-to-one matchings, a blocking coalition of size one is equivalent to one agent that can improve utility by cutting some (or all) of his matches. 
A blocking coalition of size two is equivalent to two agents that can form a new match with each other while possibly cutting their previous match (if any), or who can coordinate to cut their existing matches without forming a new match with each other.

\vspace{-2mm}
\subsection{Neutral Stability}
The hardness results in one-to-one matchings parallel the ones in the many-to-many setting.
\begin{theorem}\label{thm:neutral_core_member_one}
Checking neutral stable set membership is coNP-complete.
\end{theorem}

\begin{theorem}\label{thm:neutral_core_nonempty_one}
Checking nonemptiness of the neutral stable set is NP-hard.
\end{theorem}

We also study pair-wise stability under the neutral assumption. We first note that there is a separation between the neutral pairwise stable set and the neutral setwise stable set.
\begin{example}
Let $G = (M, W$, $\Pi)$, where $M = \{m_1, m_2\}$, $W = \{w_1, w_2\}$, and $\Pi(m_i, w_j|m_i)$ $=$ $\Pi(m_i, w_j | w_j)$ $=$ $-\varepsilon$,
$\Pi(m_i, w_j | z)$ $=$ $W \gg \varepsilon > 0$, where $z \in N \setminus \{m_i, w_j\}$, $\forall i, j \in \{1,2\}$.
The neutral stable set of $G$ is empty, while the ``empty'' matching
is pairwise stable.

\end{example}

While the neutral pairwise stable set may be empty, under certain conditions we can compute such a stable matching in polynomial time.
\begin{theorem} \label{thm:neut_stable}
A neutral pairwise stable matching can be computed in polynomial time when $\Pi$ is non-negative.
\end{theorem}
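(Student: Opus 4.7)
My plan is to invoke the Gale-Shapley deferred-acceptance algorithm with the direct-match preferences induced by $\Pi$: each man $m \in M$ ranks each woman $w \in W$ by $\Pi(m,w\mid m)$, and each woman $w$ ranks each $m$ by $\Pi(m,w\mid w)$, with the null/unmatched partner assigned value $0$ and ties broken by a fixed order. This procedure runs in $O(|M|\cdot|W|)$ time, so the heart of the proof is verifying that the output $\A$ is neutrally pairwise stable.

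Single-agent deviations are ruled out first: the only available unilateral action is severing one of the deviator's current matches, which under neutrality decreases that agent's utility by $\Pi(\cdot)\geq 0$. So any candidate blocker is a pair $\{m,w\}$ that forms the edge $(m,w)$, automatically severing $m$'s current match $(m,w')$ and $w$'s current match $(m'',w)$ (with the obvious conventions for unmatched endpoints). Writing the neutral utility changes,
\[
\Delta m = \Pi(m,w\mid m) - \Pi(m,w'\mid m) - \Pi(m'',w\mid m),\quad
\Delta w = \Pi(m,w\mid w) - \Pi(m'',w\mid w) - \Pi(m,w'\mid w),
\]
blocking demands both $\geq 0$ with at least one strict. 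Because the subtracted externality terms $\Pi(m'',w\mid m)$ and $\Pi(m,w'\mid w)$ are non-negative, these inequalities force $\Pi(m,w\mid m)\geq \Pi(m,w'\mid m)$ and $\Pi(m,w\mid w)\geq \Pi(m'',w\mid w)$, with the strict side of $\Delta m$ or $\Delta w$ carrying through to a strict direct-preference improvement. So every neutral pairwise blocker of $\A$ is in particular a classical blocking pair under the direct preferences that Gale-Shapley uses.

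The classical Gale-Shapley stability guarantee then rules out such blockers, establishing neutral pairwise stability. The main delicate point I anticipate is the treatment of ties in the direct preferences: an arbitrary tie-break only certifies weak stability in the classical sense, so in principle a neutral blocker could exploit a tie on one side together with a strict gain on the other (e.g., $\Pi(m,w\mid m)=\Pi(m,w'\mid m)$ with $\Pi(m'',w\mid m)=0$, coupled with a strict gain for $w$). I would close this gap either by a tie-breaking rule that favours matching the indifferent side to a strict-improver on the other side, or by using a strongly stable variant of deferred acceptance, both of which preserve polynomial running time while eliminating weak blockers.
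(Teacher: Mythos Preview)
Your approach is essentially identical to the paper's: run Gale--Shapley on the direct-match preferences $E_m(w)=\Pi(m,w\mid m)$, $E_w(m)=\Pi(m,w\mid w)$, and then argue that any neutral pairwise blocker of the output $\A$ would also be a classical blocking pair for these preferences, contradicting Gale--Shapley stability. Your write-up is in fact more careful than the paper's, which simply writes both deviation inequalities as strict (implicitly treating the blocker as if \emph{both} agents strictly improve) and never discusses ties or single-agent deviations.

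You are right to flag the tie issue, but your proposed repairs do not close it. Under the paper's own blocking definition (at least one member strictly improves, none worse off), a neutral pairwise stable matching need not exist even when $\Pi\ge 0$: take $M=\{m\}$, $W=\{w_1,w_2\}$ with $\Pi(m,w_i\mid m)=\Pi(m,w_i\mid w_i)=1$ and all other entries $0$. Whichever $w_i$ is matched to $m$, the pair $\{m,w_{3-i}\}$ blocks ($m$ is indifferent, $w_{3-i}$ strictly gains), and the empty matching is blocked as well. In particular no strongly stable matching exists for the induced tied preferences, so neither a ``strongly stable variant of deferred acceptance'' nor any tie-breaking rule can make some output immune. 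What both your argument and the paper's actually establish is the theorem for \emph{strong} pairwise blocking (both members strictly improve); under that reading your proof is complete and coincides with the paper's.
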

\begin{proof}
Let $\A$ be the matching returned by running the Gale-Shapley algorithm by ignoring externalities, 
and assume by contradiction $\A$ is not stable.
Then there exists  deviation $(m,w^{'})$, where $m, w$ are matched in $\A$ with $w^{'}, m^{'}$, respectively.
Let $ext(m, \A)$ denote the value obtained by $m$ in $\A$ from externalities, and $E_{m}(w) = \Pi(m,w|m)$.
From $\{m,w^{'}\}$ blocking,
$u(m, \A) = \Pi(m,w|m) + ext(m, \A) < \Pi(m,w^{'}|m) + ext(m, \A) - \Pi(m^{'}, w^{'}|m)$ and
$u(w^{'}, \A)=\Pi(m^{'}, w^{'} | w^{'}) + ext(w^{'}, \A) < \Pi(m,w^{'}|w^{'})+ ext(w^{'}, \A)- \Pi(m,w|w^{'})$.
Equivalently,
$(i)\;\Pi(m,w|m) < \Pi(m,w^{'}|m) - \Pi(m^{'},w^{'}|m) \leq \Pi(m,w^{'}|m)$ and
$(ii)\;\Pi(m^{'},w^{'}| w^{'}) < \Pi(m,w^{'}|w^{'})- \Pi(m,w|w^{'}) \leq \Pi(m,w^{'}|w^{'})$.
From $(i), (ii)$, $E_{m}(w) < E_{m}(w^{'})$ and $E_{w^{'}}(m^{'}) < E_{w^{'}}(m)$,
and so $(m,w^{'})$ is blocking under the preferences given by $E$.
Contradiction, since $\A$ is stable on $(M,W,E)$.
\end{proof}

However, the neutral pairwise-stable set can be empty when $\Pi$ can
have negative entries (Example~\ref{eg:empty_neutral_pairwise}).

\begin{example} \label{eg:empty_neutral_pairwise}
Let $G = (M, W, \Pi)$, where $M = \{m_1, m_2\}$, $W = \{w_1, w_2\}$, and $\Pi$ as follows:
$\Pi(m_i, w_j|m_i) = \Pi(m_i, w_j | w_j) = 1$, for all $i, j \in \{1,2\}$, and
$\Pi(m_i, w_j | z) = -1$, for all $i, j \in \{1,2\}$ and $z \neq m_i, w_j$.
All the edges have positive values for their endpoints, and so
the only candidates for neutral pairwise stability are:
$\A_1 = \{(m_1, w_1), (m_2, w_2)\}$ and $\A_2 = \{(m_1, w_2), (m_2, w_1)\}$.
However, matching $\A_1$ is blocked by the pair $(m_1, w_2)$, and $\A_2$ is blocked by $(m_1, w_1)$.
\end{example}

\vspace{-2mm}
\subsection{Optimistic Stability}


\begin{theorem}
Checking nonemptiness of the optimistic stable set is NP-complete, even when $\Pi$ is non-negative.
\end{theorem}
\begin{proof}
First note that similarly to the optimistic setwise-stable set, checking nonemptiness of the optimistic stable set is in NP. Given a matching $\A$,
the utility of each agent $z \in N$ in $\A$ can be computed in
$O(n^2)$. In addition, for every pair $(m,w) \in (M,W)$, we can again compute in $O(n^2)$ the best case 
utilities of $m$ and $w$ when matched with each other.
Verifying if $\A$ is stable can be done by iterating over all pairs $(m,w)$ and checking in $O(1)$ if both $m$
and $w$ can (weakly) improve by deviating under optimism, compared to their current utility in $\A$.
The same reduction as in Theorem 1 applies, by noting that the weights are constructed
such that any feasible matching (from the perspective of the $x_i$ and $y_i$ agents) is one-to-one.
\end{proof}

For the next theorem, we consider the weak optimistic stable set. Weak stability requires that all members of a blocking coalition strictly 
improve their utility in order for the deviation to take place.

\begin{theorem}
The optimistic pair-wise stable set is equivalent to the weak optimistic stable set.
\end{theorem}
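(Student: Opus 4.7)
The plan is to prove both inclusions, the heart of the argument being the constructive reduction already used in the proof of Theorem~\ref{thm:opt_mem_many}. Write $S_{\text{pw}}$ for the optimistic pair-wise stable set and $S_{\text{w}}$ for the weak optimistic stable set.

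The easy inclusion is $S_{\text{w}} \subseteq S_{\text{pw}}$. A pair-wise blocking coalition has size one or two. In the size-one case the unique member is necessarily the strict improver; in the size-two case the classical pair-wise convention is that both members strictly prefer the deviation. Either way, every member of a pair-wise blocker strictly improves under optimism, so it is also a weak setwise blocker. Contrapositively, any $\A \in S_{\text{w}}$ lies in $S_{\text{pw}}$.

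For the reverse inclusion $S_{\text{pw}} \subseteq S_{\text{w}}$, suppose $\A \in S_{\text{pw}}$ and, toward a contradiction, that a weak setwise blocking coalition $B$ blocks $\A$ via some $\A'$, so every $z \in B$ strictly improves under optimism. Pick any $z \in B$. Two cases arise, exactly as in Theorem~\ref{thm:opt_mem_many}. In case (a), $z$'s strict gain is realized purely by severing matches it held in $\A$, with any new edges incident to $z$ being initiated by the remainder of $B$ under $z$'s optimistic assumption. Then $\{z\}$ alone can perform the same cuts and, under optimism, expect the rest of $\A'$ to materialize around it; $z$'s singleton optimistic utility is at least its utility inside $B$'s deviation, which is strictly greater than $u(z, \A)$. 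Hence $\{z\}$ is a singleton pair-wise blocker. In case (b), $z$ is an endpoint of a newly formed edge $(z, z') \in \A' \setminus \A$. Since $z' \in B$ also strictly improves under the weak assumption, the pair $\{z, z'\}$ forming $(z, z')$ while each optimistically anticipates $\A'$ is a size-two pair-wise blocker in which both members strictly improve. Either case contradicts $\A \in S_{\text{pw}}$.

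The main obstacle is case (b), where we must promote the "some member strictly improves" guarantee that suffices for an arbitrary blocker to the classical "both members strictly improve" required of a pair-wise blocker. The key observation is that the constructed pair $\{z, z'\}$ is contained in the weak blocker $B$, and \emph{every} member of $B$ is by definition a strict improver, so the reduction lands in the classical pair-wise regime. Case (a) relies on the monotonicity fact that a singleton's optimistic assessment maximizes over all re-organizations of $N \setminus \{z\}$ and so dominates what $z$ can attain inside any particular coordinated deviation. Both ingredients already appear, for many-to-many matchings, inside the proof of Theorem~\ref{thm:opt_mem_many}; the one-to-one restriction plays no adversarial role.
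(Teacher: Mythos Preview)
Your proof is correct and follows essentially the same route as the paper's: the substantive inclusion $S_{\text{pw}} \subseteq S_{\text{w}}$ is handled by the identical two-case reduction (a cut-only strict improver becomes a singleton blocker; a strict improver incident to a new edge yields a blocking pair, its partner strictly improving because it too lies in the weak blocker $B$), and the reverse inclusion is treated as immediate in both. One minor remark: for the easy direction you invoke a ``classical pair-wise convention'' in which both members of a blocking pair strictly improve, whereas the paper's formal definition of a blocking coalition (hence of a pairwise blocker) only requires \emph{one} strict improver with the other not degrading---the paper itself sidesteps this by simply declaring the direction ``clear,'' so neither treatment is fully rigorous on this point.
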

\begin{proof}
Let $\A$ be a one-to-one matching. We show that if $\A$ has a blocking coalition $B$ under optimism, 
then there exists a blocking singleton or pair. We consider two cases.
If there exists $z \in B$ which improves strictly in $\A^{'}$ only by severing a match. Then agent $z$ can deviate
alone, by simply cutting the same match as in $\A^{'}$, and expecting (optimistically) that the rest of the agents 
will react as in $\A^{'}$. Otherwise, there exists $z \in B$ which improves strictly by forming a new match, say $(z, z^{'})$,
and possibly severing an existing match. Then coalition $\{z, z^{'}\}$ is blocking.
Thus if $\A$ does not belong to the weak optimistic stable set, it is also not pairwise stable under optimism. The reverse 
direction is clear, and so the optimistic pairwise stable set is equivalent to the weak optimistic stable set.
\end{proof}

\begin{corollary}
Checking nonemptimess of the optimistic pairwise set is NP-complete, even when $\Pi \geq 0$.
\end{corollary}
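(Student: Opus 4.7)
The plan is short because both ingredients are already in place. For membership in NP, I would verify optimistic pairwise stability of a candidate matching $\A$ in polynomial time by iterating over the $O(n^2)$ pairs $(m,w) \in M \times W$ and the $O(n)$ singletons: for each, compute in $O(n^2)$ time the agents' best-case optimistic reactions (as in the proof of Theorem 4) and check whether the deviation is strictly improving. If no singleton or pair is a block, $\A$ lies in the pairwise stable set.

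For NP-hardness, my plan is to reuse the Knapsack reduction from the preceding theorem (nonemptiness of the optimistic setwise stable set with $\Pi \geq 0$) and combine it with the theorem immediately above, which identifies the optimistic pairwise stable set with the weak optimistic stable set. In the forward direction, a Knapsack solution $U'$ yields, by the preceding theorem's construction, a matching $\A^*$ that lies in the regular optimistic setwise stable set; since every weak blocking coalition is also a regular blocking coalition (strict improvement for all members implies at least one strict improvement and no degradation for the rest), $\A^*$ survives weak setwise blocks as well, and hence by the equivalence theorem it is optimistic pairwise stable. In the converse direction, suppose the optimistic pairwise stable set of the reduced game is nonempty; the decomposition argument in the proof of the equivalence theorem shows that any setwise blocking coalition under optimism can be refined into either a singleton block (when the strict improvement comes from severing a match) or a pair block $(z,z')$ (when it comes from forming a new match). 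Consequently, any pairwise optimistic stable matching survives all setwise optimistic blocks, and by the preceding theorem it must encode a Knapsack solution $U' = \{u_i : (x_i,y_i) \in \A\}$.

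The main obstacle is a small subtlety in the converse: the pair/singleton decomposition in the equivalence theorem's proof relies on \emph{strict} improvement for the decomposer, so I would need to check that each setwise block used in the reduction's structural argument (to rule out non-Knapsack matchings) actually strictly improves at least one member. Inspecting the reduction, the critical gadgets are engineered with small $\varepsilon$-gaps precisely to create strict improvements for $m_1$ or $w_1$ whenever the knapsack constraints are violated, so the decomposition applies cleanly and the Knapsack structure is recovered from any pairwise-stable witness.
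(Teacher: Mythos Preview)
Your approach is correct and is exactly the intended one: the paper states this corollary without proof, since it follows immediately by combining Theorem~13 (NP-completeness for the optimistic setwise stable set in one-to-one matchings) with Theorem~14 (optimistic pairwise stable set equals the weak optimistic stable set), together with the trivial inclusion of the regular setwise stable set in the pairwise stable set.

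One small point: your ``main obstacle'' paragraph is unnecessary and slightly confused. The decomposition in Theorem~14's proof requires a strictly-improving member of the blocking coalition, but \emph{every} regular blocking coalition has such a member by definition (``improve the utility of at least one member of $B$ while not degrading the others''). So the decomposition applies to any regular setwise block without needing any gadget-specific argument; this already gives pairwise stable $=$ regular setwise stable under optimism, and the corollary is immediate from Theorem~13. Your appeal to ``$\varepsilon$-gaps'' and agent ``$w_1$'' is also misplaced: those features belong to the membership reduction (Theorem~2), whereas the nonemptiness reduction that Theorem~13 reuses is that of Theorem~1, which has agents $m_1, m_2, w$ and no $\varepsilon$.
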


\vspace{-2mm}
\subsection{Pessimistic Stability}

For pessimistic stability, we obtain similar results to the ones for neutrality, and in addition,
we consider a relaxation of pairwise stability which is well-defined for non-negative $\Pi$.
\begin{theorem}
Checking pessimistic stable set membership is coNP-complete.
\end{theorem}

\begin{theorem}
Checking nonemptiness of the pessimistic stable set is NP-hard.
\end{theorem}

Pessimistic pairwise stable matchings can be computed in polynomial time when $\Pi \geq 0$.
\begin{theorem}
A pessimistic pairwise stable set can be computed in polynomial time when $\Pi$ is non-negative.
\end{theorem}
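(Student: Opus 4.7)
The plan is to adapt the Gale--Shapley-style construction from Theorem~\ref{thm:neut_stable} to the pessimistic setting under $\Pi \geq 0$. First I would pin down what the pessimistic evaluation amounts to for a candidate deviating pair $\{m,w'\}$. Since no match carries a negative externality, the residual agents cannot worsen the deviators by forming new matches, and the worst they can do is sever every existing match that carries strictly positive externality for a deviator. Consequently $m$'s pessimistic utility after deviation collapses to $\Pi(m,w'|m)$, and symmetrically $w'$'s to $\Pi(m,w'|w')$.

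Next I define direct-value preferences $E_m(w)=\Pi(m,w|m)$ and $E_w(m)=\Pi(m,w|w)$, and reduce the problem to an ordinary stable-matching instance on $(M,W,E)$. The key observation is that, because $\Pi \geq 0$, every externality term in $u(m,\A)$ is non-negative, so $u(m,\A) \geq \Pi(m,\A(m)|m)$ and symmetrically for $w'$. If $(m,w')$ pessimistically blocks $\A$, this yields the chain $\Pi(m,w'|m) \geq u(m,\A) \geq \Pi(m,\A(m)|m)$ together with its analogue on $w'$'s side, with at least one outer inequality strict. The strict inequality propagates to the endpoints, producing either a standard Gale--Shapley block (when both sides are strict) or a strong block in the matching-with-ties sense (when only one side is strict) against $\A$ under $E$. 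Taking the contrapositive, a strongly stable matching of $(M,W,E)$ admits no pessimistic pair block; singleton deviations are ruled out for free, since under $\Pi \geq 0$ a lone deviator's pessimistic utility is $0 \leq u(z,\A)$.

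I would therefore run Irving's polynomial-time strong stability algorithm on $(M,W,E)$ and, when it returns a matching, output that as the pessimistic pairwise stable solution; when it fails, report emptiness. The hard part will be the converse direction and the handling of ties: positive externalities can shield a matching from what would be a strong $E$-block, so in principle the pessimistic stable set may be non-empty even on instances where strong stability of $E$ is unattainable. Closing this gap either requires an additional genericity assumption on $\Pi$ (so that $E$ has no ties and plain Gale--Shapley already catches every pessimistic block), or a complementary verification step---iterating over a polynomial family of tie-breakings, or directly checking the pessimistic pairwise condition on each candidate matching produced by the algorithm---that certifies emptiness of the pessimistic stable set precisely when no stable matching exists.
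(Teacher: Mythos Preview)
Your proposal has a genuine gap that you yourself flag at the end: the reduction to strong stability under the direct-value preferences $E$ is only one-directional. A pessimistic pair block of $\A$ yields $E_m(w') \geq E_m(\A(m))$ and $E_{w'}(m) \geq E_{w'}(\A(w'))$ with at least one strict, i.e.\ a weak/strong $E$-block; but a strongly stable matching under $E$ need not exist, and when Irving's algorithm reports failure you have no mechanism to either produce a pessimistic pairwise stable matching or certify that none exists. The patches you sketch (genericity on $\Pi$, enumerating tie-breakings) are not obviously complete and would each require a separate correctness argument.

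The paper's proof bypasses all of this with a single observation you are missing: every neutral pairwise stable matching is automatically pessimistic pairwise stable. The reason is that ``do nothing'' is one of the reactions available to the residual agents, so the pessimistic (worst-case) evaluation of a deviation is never better than the neutral evaluation; hence any coalition that blocks under pessimism also blocks under neutrality. This is just the pairwise instance of the inclusion $\mathcal{N}\mbox{-}set(G) \subseteq \mathcal{P}\mbox{-}set(G)$ established earlier in the paper, and it holds without any sign restriction on $\Pi$. Combining it with Theorem~\ref{thm:neut_stable}, which already computes a neutral pairwise stable matching in polynomial time when $\Pi \geq 0$, gives the result immediately. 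Your direct analysis of the pessimistic condition, and the ensuing struggle with ties, is unnecessary.
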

\begin{proof}
A neutral pairwise stable matching can be computed in polynomial time when $\Pi \geq 0$,
and any matching satisfying neutral pairwise stability also satisfies pessimistic pairwise
stability.
\end{proof}

In one-to-one matchings with non-negative $\Pi$, we also consider a restricted notion of pessimism. 
Namely, any matching containing singletons (unmatched agents) can weakly improve 
everyone's utility by pairing the singletons with each other.
Based on this observation, the pessimistic deviators can have a less extreme attitude, and assume 
that while the rest of the agents may punish them for the deviation, they will not stay unmatched 
in order to do so. In other words, a blocking coalition, $B$, assumes that the agents in $N \setminus B$
form the worst possible matching for $B$, among all the matchings of size $\min\left(|M \cap \left(N \setminus B\right)|, |W \cap \left(N \setminus B\right)|\right)$.

\begin{theorem}\label{thm:pess_stable}
A restricted pessimistic pairwise stable matching can be computed in polynomial time.
\end{theorem}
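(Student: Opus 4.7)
My first step would be to establish that restricted pessimistic utility is polynomial-time evaluable: for any deviating coalition $B$ (of size one or two) and any deviator $z \in B$, the worst-case residual contribution is the value of a minimum-weight maximum-cardinality bipartite matching on $N \setminus B$ with edge weights $\Pi(\cdot, \cdot | z)$. This reduces to a min-cost flow instance and is polynomial-time solvable. Consequently, for any fixed candidate matching $\A$, checking whether some pair or singleton blocks under restricted pessimism can be done in polynomial time by iterating over the $O(n^2)$ candidate deviations and invoking this oracle for both agents involved.

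Given this detection oracle, I would construct a stable matching algorithmically. A natural starting point is the Gale--Shapley output from Theorem~\ref{thm:neut_stable}, which is already neutral pairwise stable and (under $\Pi \geq 0$) maximum-cardinality. From there, I would iteratively swap whenever a restricted pessimistic blocking pair is detected, replacing the current edges $(m, w)$ and $(m', w')$ with $(m, w')$ and $(m', w)$, and argue termination via a monotone combinatorial progress measure. Singleton deviations can be ruled out directly under $\Pi \geq 0$: cutting a match forfeits non-negative self-utility, while the forced max-size residual contributes only non-negative externalities bounded by the original matching, mirroring the preceding pessimistic pairwise argument. Thus only pair swaps need to be controlled.

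The main obstacle will be establishing polynomial-time termination of the swap procedure. With real-valued $\Pi$, a purely numeric potential (such as the sum of endpoint utilities $\sum_{(m,w) \in \A}[\Pi(m,w|m)+\Pi(m,w|w)]$) might change by arbitrarily small increments, and cycles are in principle conceivable since the restricted pessimistic utility of a pair depends on the adversarial residual and is therefore matching-dependent. Overcoming this is likely to require either a carefully designed lexicographic or combinatorial progress measure guaranteeing that only polynomially many distinct matchings are visited, or, more cleanly, a direct non-iterative construction that encodes the restricted pessimistic stability conditions as constraints in a weighted bipartite matching formulation and solves it with standard combinatorial techniques, thereby bypassing iteration entirely while retaining polynomial runtime.
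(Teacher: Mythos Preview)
Your first paragraph is on the right track and matches the paper's Algorithm~1: the worst-case residual for a deviator $z$ matched with $t$ is indeed a minimum-weight maximum-cardinality matching on $N\setminus\{z,t\}$ with weights $\Pi(\cdot,\cdot\mid z)$, computable in polynomial time. But the crucial observation you are missing is that this quantity, call it $E^-_z(t)=\Pi(z,t\mid z)+\text{(min residual)}$, depends only on the pair $(z,t)$ and \emph{not} on any current matching $\A$. Once you see this, there is no need for any iterative repair procedure at all.

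The paper's proof simply precomputes $E^-_z(t)$ for every agent $z$ and every candidate partner $t$, treats $E^-_z$ as an ordinal preference list for $z$, and runs Gale--Shapley once on $(M,W,E^-)$. If the output $\A$ (in which $m$ is paired with $w$ and $w'$ with $m'$) were blocked under restricted pessimism by $(m,w')$, then $E^-_m(w')>u(m,\A)\ge E^-_m(w)$ and $E^-_{w'}(m)>u(w',\A)\ge E^-_{w'}(m')$, the second inequality in each chain holding because $\A$ restricted to $N\setminus\{m,w\}$ is itself one of the max-cardinality residuals over which $E^-_m(w)$ minimises. But then $(m,w')$ would be a blocking pair for the $E^-$-preferences, contradicting Gale--Shapley stability.

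Your swap-based procedure, by contrast, has the termination problem you yourself flag, and there is no obvious monotone potential since the comparison $E^-_m(w')>u(m,\A)$ mixes a matching-independent left side with a matching-dependent right side. Your proposed fallback of encoding the stability conditions into a \emph{weighted} bipartite matching is also the wrong primitive: stability here is an ordinal two-sided condition, so the right tool is Gale--Shapley on the derived preferences $E^-$, not an assignment-type optimisation.
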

\begin{proof}
Let $\A$ be the matching returned by Algorithm 1 and assume by contradiction $\A$ is unstable under
pessimistic pairwise stability. Then there exists deviating pair $(m,w^{'})$, where $m,w$ are matched in $\A$ with
$m^{'}, w^{'}$, respectively.
Then it must be the case that for any possible matching $\A^{'}(m,w^{'})$ that includes the pair $(m,w^{'})$, both $m$ and $w^{'}$
are better off in $\A^{'}(m,w^{'})$ than in $\A$.
Equivalently, $E^{-}_{m}(w^{'}) > u(m, \A) \geq E^{-}_{m}(w)$ and $E^{-}_{w^{'}}(m) > u(w^{'}, \A) \geq E^{-}_{w^{'}}(m^{'})$. However,
$\A$ is stable under $E^{-}$, and so for any $(m,w^{'}) \not \in \A$, either $E^{-}_{m}(w^{'}) \leq E^{-}_{m}(w^{'})$ or
$E^{-}_{w^{'}}(m^{'}) \leq E^{-}_{w^{'}}(m)$, contradiction. Thus $\A$ is stable.
\end{proof}

\vspace{-4mm}
\begin{algorithm}[htbp]\label{alg1}
\caption{Pessimistic Pairwise Stability}
   \ForAll{$z \in N$} {
       \ForEach{$t \in Opposite(z)$} {
           $M^{'} \gets M \setminus\{z,t\}$\\
           $W^{'} \gets W \setminus\{z,t\}$\\
           \ForEach{$(m,w) \in M^{'} \times W^{'}$} {
               $\psi(m,w) \leftarrow \Pi(m,w|z)$\\
           }
           $\A^{-} \gets \mbox{Min-Matching}(M^{'}, W^{'}, \psi)$\\
           $E^{-}_{z}(t) \leftarrow \Pi(z,t|z)$\\
           \ForEach{$(m,w) \in \A^{-}$} {
               $E^{-}_{z}(t) \leftarrow E^{-}_{z}(t) + \psi(m,w)$\\
           }
       }
   }
   \textbf{return} $\mbox{Gale-Shapley}(M, W, E^{-})$\\
\end{algorithm}

Our results for one-to-one matchings are summarized in Table 2 and Table 3.
\begin{scriptsize}
\begin{table}
 \hspace{0.01\textwidth}
 \begin{minipage}{0.58\textwidth}
  \centering
\caption{\textnormal{Set Stability in 1-1 Matchings}}
  \label{tab:caption1}
\begin{tabular}{||@{}c@{}||@{}c@{}|@{}c@{}|@{}c@{}||}
\hline \hline
\textit{Stable Set} & \textit{Neutral} & \textit{Optimistic} & \textit{Pessimistic} \\ \hline
\textit{Membership} & \textit{coNP-complete} & \textit{P} & \textit{coNP-complete} \\ \hline
\textit{Emptiness} & \textit{NP-hard} & \textit{NP-complete} & \textit{NP-hard} \\
\hline \hline
\end{tabular}
 \end{minipage}
 \hspace{0.01\textwidth}
 \begin{minipage}{0.58\textwidth}
  \centering
\caption{\textnormal{Pairwise Stability in 1-1 Matchings ($\Pi \geq 0$)}}
  \label{tab:caption2}
\begin{tabular}{||@{}c@{}||@{}c@{}|@{}c@{}|@{}c@{}|@{}c@{}||}
\hline \hline
\textit{Pairwise Stability} & \textit{Neutral} & \textit{Optimistic} & \textit{Pessimistic} \\ \hline
\textit{Membership} & \textit{P} & \textit{P} & \textit{P} \\ \hline
\textit{Emptiness} & \textit{P} & \textit{NP-complete} & \textit{P}\\ \hline
\hline \hline
\end{tabular}
 \end{minipage}
\end{table}
\end{scriptsize}

\vspace{-4mm}
\section{Related Work}
Klaus \emph{et al}~\cite{KlausWalzl09:StableManyToManyContracts} consider pairwise and setwise stability, with the weak and strong variants, in many-to-many matching markets.
Echenique \emph{et al}~\cite{Echenique06:TheoryStability} study several solution concepts,
such as the setwise-stable set, the core, and the bargaining set in many-to-many matchings.
These models do not consider externalities or the boundedness of the agents as they use exponential
preference profiles.
Externalities in the classical marriage problem have been introduced by Sasaki and Toda~\cite{SasakiToda96:MarriageExternalities}, and in one-to-many models
by Dutta and Masso~\cite{DuttaMasso97:MatchingsPreferencesColleagues}, both for complete preference profiles. 
Hafalir \cite{hafalir08} studied externalities in marriage problems,
Weighted preferences have been introduced in matchings by Pini \emph{et al.}~\cite{WalshWeighted}, in which the agents rank each other using a numerical value. However, they study solution concepts different from ours, such as $\alpha$-stability and link-additive stability, and do not consider externalities.

\vspace{-4mm}
\section{Discussion}

In this work we introduced a compact model for matching problems with externalities. 
In many-to-many matchings, we considered neutral, optimistic, and pessimistic reasoning under the setwise-stable set solution concept.
In one-to-one matchings, we considered neutral, optimistic, and pessimistic reasoning under 
the setwise-stable set and pairwise stability solution concepts.
In both the many-to-many and one-to-one settings, we studied the computational complexity of finding stable outcomes and provided
both hardness results and polynomial algorithms.

This work can be extended in several ways. A coherent theory of stability for matchings with externalities remains to be further developed. 
The existing work on the stability of matchings assumes complete preference profiles, which require exponential space and reasoning 
abilities on behalf of the agents. 
Such modelling is not realistic when dealing with bounded rational agents.
Furthermore, externalities are important in many other domains, including network formation games. 
Effects due to externalities are known to appear in networks~\cite{Jackson08:SocialEconomicNetworks}, however we are not aware of 
a theory of stability for network formation games that explicitly accounts for externalities.
Since such games involve very large number of agents, a compact model of externalities is crucial.
\section{Acknowledgements}
Simina Br\^{a}nzei was supported in part by the Sino-Danish Center for the Theory of Interactive Computation, funded by the Danish National
Research Foundation and the National Science Foundation of China (under the grant 61061130540) and by the Center for research in the
Foundations of Electronic Markets (CFEM), supported by the Danish Strategic Research Council.



\section{Appendix}

\setcounter{theorem}{6}
\begin{theorem}\label{thm:nonempty_pessimistic_many}
Checking nonemptiness of the pessimistic stable set is NP-hard.
\end{theorem}
\begin{proof}
The same reduction as in Theorem 1
applies, by noting that a coalition is blocking 
for the constructed matching under pessimistic reasoning if and only if it is blocking under neutral reasoning.
\end{proof}

\begin{theorem}\label{thm:membership_pessimistic_many}
Checking pessimistic setwise-stable set membership is coNP-complete.
\end{theorem}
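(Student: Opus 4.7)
The complementary problem---deciding whether $\A$ admits a pessimistic blocking coalition---lies in NP, which puts membership in coNP. A certificate is a pair $\langle B, \A' \rangle$. Verification is polynomial under additive externalities: first compute $u(z, \A)$ for every $z \in B$ in $O(n^2)$, then compute $z$'s pessimistic post-deviation utility by observing that the worst-case reaction by $N \setminus B$ against $z$ decouples edge-by-edge. Specifically, for every potential match $e$ with both endpoints in $N \setminus B$, the reaction includes $e$ iff $\Pi(e|z) < 0$; for every match in $\A'$ with one endpoint in $B$ and one in $N \setminus B$, the non-deviator unilaterally severs it iff $\Pi(e|z) > 0$; everything else is fixed by $\A'$. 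Summing the contributions yields the pessimistic utility $u_{\mathrm{pess}}(z, \A')$, and $\langle B, \A' \rangle$ blocks iff every $z \in B$ satisfies $u_{\mathrm{pess}}(z, \A') \geq u(z, \A)$ with at least one strict inequality and every $z \in B$ performs at least one action.

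For coNP-hardness I would reuse the Knapsack reduction from Theorem~\ref{thm:membership_neutral_many} verbatim: let $\langle G, \A \rangle$ be the pair it constructs from Knapsack instance $I$. The claim is that $I$ has a solution if and only if $\A \notin \mathcal{P}\mbox{-set}(G)$. For the forward direction, suppose $I$ has a solution $U'$. The Theorem~\ref{thm:membership_neutral_many} analysis exhibits an explicit deviation $\A'$ by the grand coalition $B = N$ under which every agent performs an action, every agent (weakly) improves, and $m_1, w_1$ strictly improve. Because $N \setminus B = \emptyset$, there is nobody to punish the deviators: the pessimistic evaluation coincides with the neutral one, so $N$ remains a blocking coalition under pessimism and $\A \notin \mathcal{P}\mbox{-set}(G)$.

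The backward direction is where the subtlety lies: a pessimistic blocking coalition need not be a neutral blocking coalition for the same $B$, since the two notions use different reaction models, so one cannot simply rerun the Theorem~\ref{thm:membership_neutral_many} argument on a given pessimistic blocker in the constructed instance. The clean way around this is to invoke the set inclusion $\mathcal{N}\mbox{-set}(G) \subseteq \mathcal{P}\mbox{-set}(G)$ proved earlier in the paper. By contrapositive, $\A \notin \mathcal{P}\mbox{-set}(G)$ forces $\A \notin \mathcal{N}\mbox{-set}(G)$, and the backward direction of Theorem~\ref{thm:membership_neutral_many}'s proof then yields the desired Knapsack solution. The main obstacle is recognizing that a structural argument on the pessimistic blocker is unnecessary; appealing to the global containment reduces the backward direction immediately to the neutral case, while the forward direction goes through only because the Theorem~\ref{thm:membership_neutral_many} construction happens to be witnessed by the grand coalition, for which neutral and pessimistic reasoning trivially agree.
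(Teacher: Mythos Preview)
Your proof is correct and follows essentially the same approach as the paper: both reuse the Theorem~\ref{thm:membership_neutral_many} reduction verbatim and argue that pessimistic and neutral blocking coincide on the constructed instance. The paper states this coincidence in a single line without elaboration; your decomposition---grand coalition for the forward direction (so $N\setminus B=\emptyset$ and the two notions trivially agree) and the containment $\mathcal{N}\mbox{-}set(G)\subseteq\mathcal{P}\mbox{-}set(G)$ for the backward direction---is a clean way to make that claim explicit, though note that in the paper the containment theorem actually appears \emph{after} this result (Section~3.4), not before; there is no circularity, but your phrase ``proved earlier'' should be adjusted.
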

\begin{proof}
The reduction from Theorem 2 applies, by noting that in matching $\A$
of Theorem 2, a coalition is blocking under neutral reasoning if and only if
it is blocking under pessimistic reasoning.
\end{proof}

\setcounter{theorem}{9}
\begin{theorem}\label{thm:neutral_core_member_one}
Checking neutral stable set membership is coNP-complete.
\end{theorem}
\begin{proof}
The same reduction as in Theorem 2 applies, by noting that the 
weights are set such that any feasible matching (from the perspective of the $x_i$ and $y_i$ 
agents) is one-to-one. In other words, a coalition can block the matching constructed in 
Theorem 2 if and only if it is a one-to-one matching.
\end{proof}

\begin{theorem}\label{thm:neutral_core_nonempty_one}
Checking nonemptiness of the neutral stable set is NP-hard.
\end{theorem}
\begin{proof}
The same reduction as in Theorem 1 applies, by noting that the
weights are set up such that any feasible matching (from the perspective of the $x_i$ and $y_i$
agents) is one-to-one, and so a matching of the game in Theorem 1 is stable if and only if
it is a one-to-one matching.
\end{proof}

\setcounter{theorem}{14}
\begin{theorem}
Checking pessimistic stable set membership is coNP-complete.
\end{theorem}
\begin{proof}
The reduction is similar to that of Theorem 1, except for 
the addition of several dummy agents to ensure that the grand coalition can always block 
in the one-to-one setting when the \textit{Knapsack} instance has a solution.
Given $I = \langle U, s, v, B, K \rangle$,
let $G = (M, W, \Pi)$ such that
$M = \{x_1, \ldots, x_{2n}, m_1, m_2\}$,
$W = \{y_1, \ldots, y_{2n}, w\}$,
and $\Pi$ with non-zero entries:
\begin{itemize}
\item $\Pi(x_i, y_i | m_1) = \Pi(x_i, y_{n+i}|m_1) = -s(u_i)$ and $\Pi(x_i, y_i | w) = \Pi(x_i, y_{n+i} | w) = v(u_i)$,
$\forall i \in \{1, \ldots, n\}$
\item $\Pi(m_1, w | m_1) = - B$
\item $\Pi(m_2, w | m_2) = K - \sum_{u_i \in U} v(u_i)$
\item $\Pi(x_j, w | x_j) = -1$ and $\Pi(m_i, y_j | y_j) = -1$, $\forall i \in \{1,2\}$ and $\forall j \in \{1, \ldots, 2n\}$
\end{itemize}
Let matching $\A = \{(m_1), (m_2), (w), (x_1), \ldots, (x_{2n})$, $(y_1)$,$ \ldots, (y_{2n})\}$.
Similarly to Theorem 1,
$A$ belongs to the pessimistic stable set if and only if the \textit{Knapsack} instance has a solution.
\end{proof}

\begin{theorem}
Checking nonemptiness of the pessimistic stable set is NP-hard.
\end{theorem}
\begin{proof}
The same reduction as in Theorem 1 applies, by noting that the
weights are set up such that any feasible matching (from the perspective of the $x_i$ and $y_i$
agents) is one-to-one, and so a matching of the game in Theorem 1
is stable if and only if it is a one-to-one matching.
\end{proof}

\end{document}